\RequirePackage{fix-cm}
\documentclass[smallextended,envcountsect,envcountsame,natbib]{svjour3}
\smartqed

\usepackage[T1]{fontenc}
\usepackage{amsmath,amssymb}
\usepackage{graphicx}
\usepackage{booktabs}
\usepackage{array}
\usepackage{enumitem}
\usepackage{tikz}
\usepackage{xcolor}
\usepackage{xspace}
\usepackage[hidelinks]{hyperref}
\newcolumntype{L}[1]{>{\raggedright\arraybackslash}p{#1}}

\newcommand{\MCMKP}{\textnormal{MinMax-Pack}\xspace}
\newcommand{\MPMKC}{\textnormal{MaxMin-Cov}\xspace}
\newcommand{\LBP}{\textnormal{\textsc{LBP}}\xspace}
\newcommand{\GBP}{\textnormal{\textsc{GBP}}\xspace}
\newcommand{\KP}{\textnormal{\textsc{KP}}\xspace}
\newcommand{\SubsetSum}{\textnormal{\textsc{Subset Sum}}\xspace}
\newcommand{\OPT}{\operatorname{OPT}}
\newcommand{\optP}{\OPT_{\mathrm{P}}}
\newcommand{\optC}{\OPT_{\mathrm{C}}}
\newcommand{\NP}{\mathsf{NP}}
\newcommand{\PP}{\mathsf{P}}
\newcommand{\APX}{\mathsf{APX}}
\newcommand{\XP}{\mathsf{XP}}
\newcommand{\FPT}{\mathsf{FPT}}
\newcommand{\Wone}{\mathsf{W[1]}}
\newcommand{\vseven}[1]{\textcolor{black}{#1}}

\journalname{Journal of Combinatorial Optimization}

\begin{document}

\title{Inclusion-wise extremal knapsack problems are (mostly) hard to approximate}
\titlerunning{Inclusion-wise extremal knapsack problems}

\author{Fabio Ciccarelli \and Fabio Furini \and Ulrich Pferschy \and Elena Rener}
\authorrunning{F. Ciccarelli, F. Furini, U. Pferschy, E. Rener}

% JOCO asks for department, institution, city and country of every author,
% and for a corresponding author (to be chosen).  Still missing: the
% corresponding author.
\institute{F. Ciccarelli \at
             Department of Computer, Control and Management Engineering
             ``Antonio Ruberti'' (DIAG),
             Sapienza University of Rome, Rome, Italy\\
             \email{f.ciccarelli@uniroma1.it} \\
             ORCID: 0009-0005-5795-5574
           \and
           F. Furini \at
             Department of Computer, Control and Management Engineering
             ``Antonio Ruberti'' (DIAG),
             Sapienza University of Rome, Rome, Italy\\
             \email{fabio.furini@uniroma1.it}\\
             ORCID: 0000-0002-1839-5827
           \and
           U. Pferschy \at
             Department of Operations and Information Systems,
             University of Graz, Graz, Austria\\
             \email{ulrich.pferschy@uni-graz.at}\\
             \vseven{ORCID: 0000-0001-8881-1497}
           \and
           E. Rener \at
             Independent researcher, Bern, Switzerland\\
             \email{elenarener@gmail.com}\\
             ORCID: 0000-0002-2371-8360
}

\date{Received: date / Accepted: date}

\maketitle

\begin{abstract}
\vseven{We study two inclusion-wise extremal variants of the 0--1 Knapsack
Problem (\KP).}
Given $n$ items with integer weights and nonnegative costs or profits, the
Minimum-Cost Maximal Knapsack Packing Problem seeks a minimum-cost packing
for capacity $c$ to which no item can be added, while the Maximum-Profit Minimal Knapsack
Cover Problem seeks a maximum-profit cover for threshold $d$ from
which no item can be removed.  Both problems are weakly $\NP$-hard
and pseudo-polynomially solvable, as is the classical \KP, but their
approximability is fundamentally different: \vseven{unless $\PP=\NP$,
with positive values neither problem admits a polynomial-time
$2^{p(n)}$-approximation for any polynomial $p$; allowing zero values
rules out every multiplicative factor.} \vseven{Even} with unit costs for packing and
unit profits for covering, neither
admits an $n^{1-\varepsilon}$-approximation for any fixed $\varepsilon>0$,
whereas a trivial $n$-approximation exists.
\vseven{We identify restrictions that restore approximability.}
If every item weighs at least a fixed fraction of the capacity,
respectively of the threshold, both problems can be solved exactly in
polynomial time.  For the cover problem, if each weight $w_i$ satisfies
$w_i-1\ge\delta(d-1)$, we give an approximation scheme for arbitrary
profits whose running time is polynomial in $n$, $1/\varepsilon$, and
$1/\delta$.  If values equal weights, a fully polynomial-time approximation
scheme (FPTAS) was already known for the packing problem; we give an FPTAS
for the cover problem, improving on the previously known factor $2$.
\vseven{If value-to-weight ratios differ by at most a factor $\kappa$, we
obtain a $(1+\varepsilon)\kappa$-approximation for both problems.}
\keywords{knapsack problem \and maximal packing \and minimal cover \and Lazy Bureaucrat Problem \and Greedy Boss Problem \and inapproximability \and approximation schemes}
\subclass{90C27 \and 68Q17 \and 68W25}
\end{abstract}

% ===========================================================================
\section{Introduction}
\label{sec:intro}
% ===========================================================================

The 0--1 Knapsack Problem (\KP) is the standard example of a weakly $\NP$-hard
problem that is very well behaved.  With $n$ items and capacity $c$ it is
solved exactly in $O(nc)$ time by dynamic programming \citep{KPP2004}, and
it admits a fully polynomial-time approximation scheme (FPTAS), that is, for
every $\varepsilon>0$ an algorithm that returns a solution within a factor
$1+\varepsilon$ of the optimum in time polynomial in the input size and $1/\varepsilon$
\citep{IbarraKim1975,Lawler1979,KellererPferschy2004}.  It is obtained by scaling
the profits and running a profit-indexed dynamic program.  For \KP,
pseudo-polynomial exact solvability and full approximability go together,
and the second follows from the first, through the profit-indexed
version of the dynamic program.  The central question of this paper
is whether this implication survives when feasible solutions are required
to be inclusion-wise extremal.

This paper studies two variants of \KP in which feasible solutions are
required to be \emph{inclusion-wise extremal}.  We are given a set $I$ of
$n$ items; item $i\in I$ has a positive integer weight $w_i$ and a
nonnegative value $v_i$, and we write $w(S)=\sum_{i\in S}w_i$ for
$S\subseteq I$; in particular, $w(I)$ is the total weight of all items.
In the Minimum-Cost Maximal Knapsack Packing Problem
(\MCMKP) the values are costs.  One must select a set of items that fits
into a knapsack of positive integer capacity $c$ and to which no
further item can be added without exceeding the capacity, and among these
\emph{maximal} packings one of minimum cost is sought.  In the
Maximum-Profit Minimal Knapsack Cover Problem (\MPMKC) the values are
profits.  One must select a set of items whose
total weight reaches a positive integer threshold $d$ and from which no item can be removed without dropping
below it, and among these \emph{minimal} covers one of maximum profit is
sought.  Throughout the paper, the weights $w_i$, the capacity $c$ and the
threshold $d$ are positive integers.  The two problems are complementary
\citep{FLS2017}: the complement
of a maximal packing is a minimal cover for the threshold $d=w(I)-c$, and vice
versa.  Both problems are weakly
$\NP$-hard and are solved by the same kind of dynamic program as \KP: \MCMKP in
$O(nc)$ time, and \MPMKC in $O(nd)$ time directly or in $O(n(w(I)-d))$ time
by complementation.

Both problems come from scheduling.  Let every item be a job with processing
time $w_i$, released at time $0$ and due at time $c$, and consider a single
worker who is not allowed to stay idle while some job can still be started
and completed.  The sets of jobs the worker may execute are exactly the maximal packings.
Minimising the total processing time is the Lazy
Bureaucrat Problem (\LBP) with common arrivals and deadlines \citep{Arkin2003,GMP2013}.
Minimising the number of executed jobs is the
unit-cost case.  A cost $v_i$ unrelated to $w_i$ models a worker who dislikes
some jobs more than others, and minimising the total disutility of the
executed jobs is exactly \MCMKP.
The cover problem is the maximisation
counterpart: when $v_i=w_i$ it is the Greedy Boss Problem (\GBP) of \citet{GMP2013}.

The answer to the approximability of the studied knapsack problems is surprisingly negative: inclusion-wise extremality preserves
pseudo-polynomial solvability, but can almost completely destroy
approximability.  In the
proportional case $v_i=w_i$ approximability survives: the Lazy Bureaucrat
Problem with common arrivals and deadlines admits an FPTAS \citep{GMP2013}.
In general, however, although \MCMKP and \MPMKC retain the weak
$\NP$-hardness and the pseudo-polynomial exact algorithms of \KP, the
approximation picture is radically different when the objective is allowed
to vary independently of the weights.  \vseven{With positive values,
neither problem admits an approximation within $2^{p(n)}$ for any
polynomial $p$; allowing zero values rules out every multiplicative factor.}
\vseven{With} unit costs for \MCMKP or unit profits for \MPMKC,
even a factor $n^{1-\varepsilon}$ is impossible, whereas every feasible
solution is within a factor $n$ of the optimum.
Table~\ref{tab:comparison} summarises this separation.

\begin{table}[ht]
\caption{Classical knapsack versus general \MCMKP and \MPMKC, assuming
$\PP\neq\NP$.}
\label{tab:comparison}
\centering\small
\begin{tabular}{@{}L{0.26\linewidth}L{0.06\linewidth}L{0.28\linewidth}L{0.28\linewidth}@{}}
\toprule
 & \KP & \MCMKP & \MPMKC \\
\midrule
Weakly $\NP$-hard & yes & yes (Thm.~\ref{thm:fls}) & yes (Thm.~\ref{thm:fls}) \\
Weight-indexed pseudo-polynomial DP & yes & yes (Thm.~\ref{thm:fls}) & yes (Thm.~\ref{thm:fls}) \\
Value-indexed pseudo-polynomial DP & yes & no (Cor.~\ref{cor:nofptas}) & no (Cor.~\ref{cor:nofptas}) \\
FPTAS & yes & no, even with unit costs (Cor.~\ref{cor:nofptas}) & no, even with unit profits (Cor.~\ref{cor:nofptas}) \\
Constant-factor approximation & yes & no, even with unit costs (Thm.~\ref{thm:unitgap}) & no, even with unit profits (Thm.~\ref{thm:unitgap}) \\
$n^{1-\varepsilon}$-approximation & yes & no, even with unit costs (Thm.~\ref{thm:unitgap}) & no, even with unit profits (Thm.~\ref{thm:unitgap}) \\
\vseven{Factor $2^{p(n)}$ with positive values} & yes & \vseven{no (Thm.~\ref{thm:noconst})} & \vseven{no (Thm.~\ref{thm:noconst})} \\
\bottomrule
\end{tabular}
\end{table}

Our results also identify the two structural features behind this loss of
approximability: items of vanishing relative weight and objectives that are
poorly correlated with the weights.  This leads naturally to the second
question of the paper: what happens when either of these two features is
ruled out?  We show that ruling out either feature allows a substantially better algorithmic behaviour.
On the weight side,
excluding arbitrarily small items yields exact polynomial-time algorithms.
On the objective side, proportional objectives admit approximation schemes,
in particular an FPTAS for the Greedy Boss Problem, for which only a
factor $2$ was known.
Thus, at the proportional endpoint both problems
admit FPTASs, and bounding the value-to-weight spread still yields constant-factor  approximations.
Table~\ref{tab:landscape}
at the end of the paper summarises the picture.

% ===========================================================================
\section{Our contributions}
\label{sec:contrib}
% ===========================================================================

The picture of approximation that we will develop has two sides.
We first establish how severe the
loss of approximability can be, and then identify structural restrictions
under which efficient approximation---or even exact polynomial-time
solvability---is recovered.  The main hardness results hold for both problems of the pair, while the
positive results also reveal some asymmetries between the packing and cover sides.  The packing and cover problems are complementary
at the level of feasible solutions, but not at the level of approximation
ratios, since complementation is affine on the objective
(Section~\ref{sec:complement}); hardness gaps, however, do transfer.

Our first two results concern inapproximability and show how poorly the
problems can be approximated in general.
\begin{enumerate}[label=(\roman*),leftmargin=2.5em,itemsep=2pt]
  \item \vseven{\emph{Inapproximability with positive values}}
        (Theorem~\ref{thm:noconst}).  \vseven{Unless $\PP=\NP$, even with
        positive values neither \MCMKP nor \MPMKC admits a polynomial-time
        $2^{p(n)}$-approximation for any polynomial $p$.  If zero values
        are allowed, no multiplicative approximation factor is possible.}
        This closes the case
        left open between \citet{Arkin2003}, \citet{Esfahbod2003} and
        \citet{GaiZhang2008}.
  \item \emph{No $n^{1-\varepsilon}$-approximation, even with unit
        costs or profits}
        (Theorem~\ref{thm:unitgap}, Proposition~\ref{prop:trivial}).  For
        every fixed $0<\varepsilon<1$, unit-cost \MCMKP and unit-profit
        \MPMKC admit no polynomial-time $n^{1-\varepsilon}$-approximation
        unless $\PP=\NP$, while any maximal packing, or minimal cover, is an
        $n$-approximation, so the exponent one is tight.  The construction
        is the one used by \citet{GaiZhang2008} (Theorem~7 there), with the number of
        tiny jobs raised from $\tilde n+1$ to $\tilde n^k$, for a constant
        $k>1/\varepsilon$; our contribution is the observation that the
        resulting gap rules out every ratio $n^{1-\varepsilon}$ with
        $\varepsilon>0$ fixed, on both sides of the pair.
\end{enumerate}
Both reductions rest on \emph{exact filling} (Section~\ref{sec:struct}):
a unit-weight item outside a maximal packing, or inside a minimal cover,
forces the load to equal the capacity, or the threshold, exactly.

The next three results show how approximability is recovered by
restricting the input data, first the weights and then the values.
\begin{enumerate}[label=(\roman*),leftmargin=2.5em,itemsep=2pt,resume]
  \item \emph{No small items} (Theorem~\ref{thm:nonsmall},
        Proposition~\ref{prop:fptas-delta}).  If every item weighs at least
        a fixed fraction $\delta$ of the capacity, or of the threshold, both
        problems are solved exactly in $n^{O(1/\delta)}$ time.  On the cover
        side the dependence on $\delta$ can be made polynomial: under the
        slightly stronger condition $w_i-1\ge\delta(d-1)$, \MPMKC admits,
        for arbitrary profits, an approximation scheme whose running time
        is polynomial in $n$, $1/\varepsilon$ and $1/\delta$.
  \item \emph{Approximation schemes for the Greedy Boss Problem}
        (Corollary~\ref{cor:ptas}, Theorem~\ref{thm:gbpfptas}).  At the
        proportional endpoint $v_i=w_i$ the packing problem was known to admit an FPTAS
        \citep{GMP2013}, while for the cover problem only the tight factor
        $2$ of Max Greedy was known.
        An algorithm that guesses the $k$
        heaviest items of an optimal minimal cover and completes the solution greedily is a $\frac{k+2}{k+1}$-approximation in $O(n^{k+1})$ time; $k=1$
        already improves the factor $2$ to $3/2$.  The technique is the
        cover-side counterpart of the PTAS of \citet{GMP2014} for the lazy
        matroid problem.  Most importantly, we exploit the two-sided residual
        structure, together with an FPTAS for \SubsetSum, to obtain an FPTAS
        for \GBP.  This
        settles the question of \citet{GMP2013} and shows that at the
        proportional endpoint both problems admit FPTASs.
  \item \emph{Bounded spread} (Theorem~\ref{thm:spread}).  Extending the proportional case, if the ratio
        $\kappa$ between the largest and the smallest value-to-weight ratio
        is bounded, both problems admit a
        $(1+\varepsilon)\kappa$-approximation for every
        $\varepsilon>0$, in time polynomial in the input size and $1/\varepsilon$.
\end{enumerate}

\paragraph{Organisation of the paper.}
Section~\ref{sec:lit} places our results in the literature.
Section~\ref{sec:def} develops the structural properties used throughout.
Section~\ref{sec:hard} proves the inapproximability results and explains
why the classical profit-scaling route to an FPTAS fails.
Sections~\ref{sec:pos} and~\ref{sec:values} show how approximability is
recovered by restricting the weights and the values, respectively; the
latter contains the approximation schemes for the Greedy Boss Problem.  Section~\ref{sec:concl} concludes with the resulting approximation
landscape and open questions.

% ===========================================================================
\section{Literature review}
\label{sec:lit}
% ===========================================================================

Maximal packings and minimal covers have been studied in three communities
that rarely cite each other.  In scheduling, they appear as the feasible sets
of the Lazy Bureaucrat Problem, where a worker who wants to do as little work as
possible is not allowed to stay idle.  In the knapsack literature, they appear
as knapsack problems with an additional maximality or filling constraint,
studied mainly from the point of view of exact algorithms.  In bin packing,
the same ``reverse'' viewpoint has produced several lazy variants with many
bins.  We review the three lines in turn, then recall what is known on the
approximation of the classical knapsack problem, which is the benchmark
against which our results should be read.  Section~\ref{sec:known}
summarises the state of the art for the two problems studied here and
identifies the questions left open.

\subsection{The Lazy Bureaucrat problem}

\citet{Arkin2003} introduced the Lazy Bureaucrat Problem \LBP: a single worker
must process jobs with release times, processing times and deadlines, may
not stay idle while an executable job is available, and wants to do as
little work as possible.  They consider three objectives, total time spent,
weighted number of executed jobs and makespan, and prove that the
nonpreemptive problem is strongly $\NP$-hard and not approximable within any
fixed factor.  With a common release time the problem becomes weakly $\NP$-hard,
and they claim a pseudo-polynomial dynamic program (Theorem~9 there)
without giving the algorithm.  Their inapproximability
construction (Theorem~1 there) uses a common release time but a job with a
\emph{different deadline}.  \citet{HepnerStein2002} study the preemptive
version.

\citet{Esfahbod2003} consider a common deadline and arbitrary release times.
They prove weak $\NP$-hardness for all objectives, give a pseudo-polynomial
dynamic program, show that Shortest Job First is a tight $2$-approximation
for the makespan, and prove that minimising the number of executed jobs, and
hence a weighted sum of executed jobs, is not approximable within any fixed
factor (Theorem~2 and Corollary~1 there).  Their construction uses extra
jobs with a \emph{different release time}.

\citet{GaiZhang2008}, extending \citet{GaiZhang2006}, stay with a common
deadline.  They show that Shortest Job First has ratio $2$ also for the
total time spent, and give polynomial-time approximation schemes for the
makespan and for the time spent; the scheme for the makespan guesses the $k$
largest jobs of an optimal schedule and completes greedily, with ratio
$1+1/k$.  Their Theorem~7 proves that the problem remains weakly $\NP$-hard
under all four objectives when the release time \emph{and} the deadline are
common, which is exactly the setting of the present paper.  For the
number-of-jobs objective the reduction adds, to an instance of \SubsetSum
(decide whether some of $\tilde n$ given positive integers sum exactly to a
given target), $\tilde n+1$ tiny jobs of length $1/(\tilde n+1)$; the
answer is yes if and only if at most $\tilde n$ jobs are executed.  No inapproximability bound is stated for this setting.

\citet{GMP2013} study common arrivals and deadlines with the total time
spent as objective, that is, \MCMKP with $v_i=w_i$.  They give a linear-time
$4/3$-approximation and an FPTAS, and study strategyproof mechanisms w.r.t.\ job duration.
They also introduce the maximisation counterpart, the Greedy Boss Problem \GBP, and prove that it is $\NP$-complete.
It is also shown that the longest-job-first greedy algorithm Max Greedy is a $1/2$-approximation, more precisely a $t/(t+1)$-approximation when it returns $t$ jobs.  In their concluding section they ask for ``a
complete analysis of the Greedy Boss Problem \ldots in terms of
approximability''.  The same authors generalise the Lazy Bureaucrat Problem
to matroids in \citet{GMP2014} and give a PTAS for it, still with the
objective equal to the weight; the scheme guesses the $k$ heaviest elements
of an optimal solution and has ratio $\frac{k+1}{k}$.  \citet{GaiZhang2017} study an online
version with a machine deadline.

\subsection{Knapsack models with maximality and filling constraints}

\citet{FLS2015} study the packing problem with common arrivals and deadlines and
arbitrary costs, still under the name Lazy Bureaucrat Problem: they give the
first pseudo-polynomial dynamic program for it, based on the decomposition by
the smallest excluded item, together with integer and constraint programming
formulations.
\citet{FLS2017} introduce the names \textsc{MCMKP} and \textsc{MPMKC} for
the two problems with costs, or profits, independent of the weights, prove
the complementarity between them, and give the $O(nc)$ dynamic program
together with a computational study.  They deliberately refrain from the
names Lazy Bureaucrat and Greedy Boss when the objective is not
proportional to the weights, and we follow them: throughout, \LBP and \GBP
denote the proportional restrictions, and \MCMKP and \MPMKC the general
problems.  Recently \citet{MPS2026} met the same minimum-filling subproblem
inside that dynamic program in a fairness-constrained knapsack problem.

\citet{Xu2013}, extending \citet{XuLai2011}, studies the knapsack problem with a minimum filling
constraint: for a given $\beta\in[0,1]$, the total weight of the selected
items must be at least $\beta c$.  For $\beta=1$ deciding feasibility is
already \SubsetSum, hence $\NP$-complete; for $\beta<1$ an approximation scheme is given that returns a feasible solution and runs in time polynomial in $n$, $1/\varepsilon$ and $1/(1-\beta)$.
Maximality forces a minimum load, so every maximal packing
is feasible for a minimum-filling instance, but the converse fails; the exact
correspondence, which we use in Section~\ref{sec:fpt}, is with the residual
problem obtained after fixing the smallest selected item of a minimal
cover.  The survey of single knapsack problems by \citet{Cacchiani2022}
mentions \MCMKP only through \citet{FLS2017}, and lists the
minimum-filling knapsack among the less studied variants.

\subsection{Related packing and covering problems}

Other ``lazy'' packing problems have been studied with several bins and a cardinality objective.
In lazy bin covering every bin must be a minimal
cover of the capacity, at most one bin may be partially filled, and the number of bins is minimised; it admits an asymptotic PTAS
\citep{LinYangXu2010} and is $\APX$-hard \citep{GaiZhang2009}.  Neither the
model nor the results are comparable with ours.  A single bin of the Open-End Bin Packing Problem
\citep{LeungDrorYoung2001,YangLeung2003,Balogh2021} has a structure
reminiscent of a minimal cover, since one designated last item may cross the
threshold, but there the objective is the number of bins and only that item
must be removable.  \citet{GMT2018} add digraph constraints to subset sum; two of
their four problems impose maximality and generalise the \LBP, again with
the objective equal to the weight.  They prove strong $\NP$-hardness on
directed acyclic graphs and give a PTAS on them.

% The graph problems with an inclusion-wise extremal objective (max min
% vertex cover, upper domination, max min feedback vertex set, max minimal
% set cover) share only the shape of the objective; they are strongly
% NP-hard and their techniques do not transfer.  The team decided on
% 9 September 2026 not to cite them.  If a referee asks, one sentence with
% \citet{Dublois2021,DubloisLampisPaschos2022,Damasceno2026} is enough.

\subsection{Approximation of the knapsack problem}

For \KP, \citet{Sahni1975} gave a PTAS and \citet{IbarraKim1975} the first FPTAS.
The running time has since been improved many times
\citep{Lawler1979,KellererPferschy2004,ChenLianMaoZhang2024,Mao2024}.
The minimisation knapsack problem, minimise the cost subject to a lower bound on the load,
also admits an FPTAS.
Approximation guarantees do not transfer between it
and \KP \citep{GuntzerJungnickel2000}, but a constant-factor bound from
their greedy algorithms followed by cost scaling and a cost-indexed dynamic
program works, because the load constraint is one-sided.  All
these schemes rest on the same dominance property: among partial solutions
of equal profit the lightest one is best.  This contrast is central to our
paper: maximality preserves the weight-indexed pseudo-polynomial dynamic
program, but destroys the profit-side dominance that underlies classical
scaling arguments (cf.~Section~\ref{sec:hard}).

\subsection{Summary of known results}
\label{sec:known}

Table~\ref{tab:known} places the above results from a scheduling context along the two restrictions that define our problems, common release and common deadline, and along the
objective.  The table makes the missing case explicit:
the two known inapproximability results rely either on different release
times or on different deadlines, whereas with common release, common
deadline and independent costs, weak $\NP$-hardness and pseudo-polynomial
solvability were known, but no approximation guarantee or inapproximability
bound was available.
Approximability in this case, and the approximability of the Greedy Boss
Problem beyond the factor $2$, are the gaps this paper fills.

\begin{table}[ht]
\caption{Closest known results and the approximation gap addressed in this paper.}
\label{tab:known}
\centering\small\setlength{\tabcolsep}{4pt}
\begin{tabular}{@{}L{0.19\linewidth}L{0.11\linewidth}L{0.11\linewidth}L{0.19\linewidth}L{0.27\linewidth}@{}}
\toprule
Reference & Release & Deadline & Objective & Result \\
\midrule
\citet{Arkin2003} & common & different & time spent, weighted number, makespan & no fixed factor; weakly $\NP$-hard; pseudo-polynomial DP claimed \\
\citet{Esfahbod2003} & different & common & number of jobs, weighted sum & no fixed factor \\
\citet{GaiZhang2008} & common & common & all four & weakly $\NP$-hard only \\
\citet{GMP2013} & common & common & time spent ($v_i=w_i$) & $4/3$ in linear time; FPTAS \\
\citet{GMP2013} & --- & --- & \GBP ($v_i=w_i$, cover) & $\NP$-complete; factor $2$ \\
\citet{FLS2017} & common & common & independent costs & $O(nc)$ DP; complementarity \\
\midrule
this paper, Thm.~\ref{thm:noconst} & common & common & independent costs & \vseven{no factor $2^{p(n)}$ with positive costs; no factor if zero is allowed} \\
this paper, Thm.~\ref{thm:unitgap}, Prop.~\ref{prop:trivial} & common & common & unit costs / profits & no $n^{1-\varepsilon}$; factor $n$ by any feasible solution \\
this paper, Thm.~\ref{thm:gbpfptas} & --- & --- & \GBP & FPTAS \\
\bottomrule
\end{tabular}
\end{table}

% ===========================================================================
\section{Definitions and preliminaries}
\label{sec:def}
% ===========================================================================

The results that follow rely on three elementary structural facts:
maximality and minimality are determined by a single extremal item, a
unit-weight item may force an exact filling, and packing and cover solutions
are complementary.  We establish these facts before turning to
approximation, and close the section with results from the literature that we use as black boxes.

\subsection{The two problems under consideration}

Let $I=\{1,\ldots,n\}$ be a set of items.  Item $i$ has a positive integer
weight $w_i$ and a nonnegative value $v_i$, which is a cost on the packing
side and a profit on the cover side.  For $P\subseteq I$ write
$w(P)=\sum_{i\in P}w_i$ and $v(P)=\sum_{i\in P}v_i$.  Uppercase letters are
sets and lowercase letters are scalars; $P$ always denotes a packing and $C$
a cover, while $c$ is the capacity and $d$ the threshold.  The words
\emph{maximal} and \emph{minimal} refer to set inclusion.  Capacities $c$
and thresholds $d$ are positive integers.  A set $P\subseteq I$ is a
\emph{packing} for $c$ if $w(P)\le c$, and a set $C\subseteq I$ is a
\emph{cover} for $d$ if $w(C)\ge d$.

\begin{definition}[Maximal packing, \MCMKP]
\label{def:mcmkp}
Let $c$ be a positive integer.  A packing $P$ for $c$ is \emph{maximal} if $w(P)+w_j>c$ for all
$j\in I\setminus P$.  The
problem \MCMKP asks for
$\optP=\min\{v(P): P \text{ is a maximal packing for } c\}$.
\end{definition}

\begin{definition}[Minimal cover, \MPMKC]
\label{def:mpmkc}
Let $d$ be a positive integer.  A cover $C$ for $d$ is \emph{minimal} if
$w(C\setminus\{j\})<d$ for all $j\in C$.  The problem
\MPMKC asks for
$\optC=\max\{v(C): C \text{ is a minimal cover for } d\}$.
\end{definition}

\paragraph{Nontrivial instances.}
On the packing side we assume $\min_{i\in I}w_i\le c<w(I)$, so that neither
$\emptyset$ nor $I$ is a maximal packing; on the cover side we assume
$0<d\le w(I)$.  Other instances are recognised and solved in linear time.

\subsection{Special cases and approximation terminology}

The \emph{spread} $\kappa$ of an instance with positive values is
\begin{equation}
\label{eq:kappa}
 \kappa=\frac{\max_{i\in I}v_i/w_i}{\min_{i\in I}v_i/w_i}\ \ge 1,
\end{equation}
equivalently, $\kappa=\lambda_+/\lambda_-$ where
$\lambda_+=\max_{i\in I}v_i/w_i$ and
$\lambda_-=\min_{i\in I}v_i/w_i$, so that
$\lambda_-w_i\le v_i\le\lambda_+w_i$ for all $i\in I$.  The proportional case is $\kappa=1$; scaling
shows that we may then assume $v_i=w_i$.  Table~\ref{tab:restrictions}
lists the restrictions studied.

\begin{table}[ht]
\caption{Restrictions of the two problems.}
\label{tab:restrictions}
\centering\small\setlength{\tabcolsep}{4pt}
\begin{tabular}{@{}lll@{}}
\toprule
Restriction & Packing side (\MCMKP) & Cover side (\MPMKC) \\
\midrule
Unit costs/profits & $v_i=1$ for all $i\in I$ & $v_i=1$ for all $i\in I$ \\
Proportional & $v_i=w_i$ for all $i\in I$: the \LBP & $v_i=w_i$ for all $i\in I$: the \GBP \\
Bounded spread & $\kappa$ fixed, see \eqref{eq:kappa} & $\kappa$ fixed \\
Non-small items & $w_i\ge\delta c$ for all $i\in I$, $\delta$ fixed &
                  $w_i\ge\delta d$ for all $i\in I$, $\delta$ fixed \\
\bottomrule
\end{tabular}
\end{table}

For $\alpha\ge1$, an $\alpha$-approximation returns a feasible solution of
value at most $\alpha\cdot\OPT$ for a minimisation problem and at least
$\OPT/\alpha$ for a maximisation problem; a maximisation guarantee usually
written ``$1/2$-approximation'' is a factor-$2$ approximation in this
convention.  A PTAS is a family of polynomial-time
$(1+\varepsilon)$-approximations, one for every fixed $\varepsilon>0$; an
FPTAS is a PTAS polynomial in the input size and $1/\varepsilon$ jointly.  In hardness
statements of the form ``no $n^{1-\varepsilon}$-approximation'',
$\varepsilon$ is a fixed constant in $(0,1)$.  Inside reductions, $\tilde n$
denotes the number of integers of the \SubsetSum instance and $n$ the number of items of
the constructed instance.

\subsection{Structural properties}
\label{sec:struct}

Checking maximality or minimality does not require looking at every item:
one item suffices in each case.  This simple observation drives both the
exact algorithms and the hardness constructions.

\begin{lemma}[Maximality is decided by the smallest excluded item]
\label{lem:maxchar}
Let $P\subsetneq I$ be a packing for $c$.  Then $P$ is maximal if and only
if $c-w(P)<\min_{j\in I\setminus P}w_j$; with integral weights,
$c-\min_{j\in I\setminus P}w_j+1\le w(P)\le c$.
\end{lemma}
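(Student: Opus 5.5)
The plan is to unfold Definition~\ref{def:mcmkp} and reduce the ``for all $j\in I\setminus P$'' condition to the single item of minimum weight outside $P$. Since $P\subsetneq I$, the set $I\setminus P$ is nonempty, so $m:=\min_{j\in I\setminus P}w_j$ is well defined. By definition, $P$ is maximal if and only if $w(P)+w_j>c$ for every $j\in I\setminus P$, that is, if and only if $c-w(P)<w_j$ for every such $j$.

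For the ``if'' direction, I will assume $c-w(P)<m$. Then for each $j\in I\setminus P$ we have $c-w(P)<m\le w_j$, hence $w(P)+w_j>c$, so $P$ is maximal. For the ``only if'' direction, I will take $j^\ast\in I\setminus P$ with $w_{j^\ast}=m$ and apply maximality to this particular item. This gives $w(P)+m>c$, i.e.\ $c-w(P)<m$. Together the two directions give the stated equivalence.

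For the integral reformulation, note that $w(P)$, $c$ and $m$ are all integers. Hence the strict inequality $c-w(P)<m$ is equivalent to $c-w(P)\le m-1$, i.e.\ $w(P)\ge c-m+1$. Combining this with the packing condition $w(P)\le c$ yields $c-m+1\le w(P)\le c$.

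No step is a real obstacle. The only points needing care are that $P\ne I$, so the minimum exists, and that integrality is used precisely to turn the strict inequality into a non-strict one.
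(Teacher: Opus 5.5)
Your proof is correct and takes the same route as the paper. The paper's proof is the single observation that no excluded item fits if and only if the lightest excluded item does not; you spell out both directions, check that the minimum exists because $P\neq I$, and make the integrality step explicit.
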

\begin{proof}
No excluded item fits if and only if the lightest excluded item does not.
\qed
\end{proof}

Sort the items so that $w_1\le\cdots\le w_n$, breaking ties arbitrarily
once and for all, and write $w_{<i}=\sum_{j=1}^{i-1}w_j$ and
$P_{>i}=P\cap\{i+1,\ldots,n\}$.  Fixing item $i$ as the \emph{first}
excluded item in this order forces $1,\ldots,i-1$ into the packing and
leaves a knapsack problem with the two-sided constraint
$\ell_i\le w(P_{>i})\le u_i$, where $\ell_i=c-w_{<i}-w_i+1$,
$u_i=c-w_{<i}$ and $u_i-\ell_i=w_i-1$.  This is
the decomposition behind the dynamic programs of \citet{FLS2015} and
\citet{FLS2017}.

\begin{lemma}[Minimality is decided by the smallest selected item]
\label{lem:minchar}
Let $C\neq\emptyset$ be a cover for $d$.  Then $C$ is minimal if and only if
$w(C)-\min_{j\in C}w_j<d$; with integral weights,
$d\le w(C)\le d+\min_{j\in C}w_j-1$.
\end{lemma}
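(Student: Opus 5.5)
The plan mirrors the proof of Lemma~\ref{lem:maxchar}: reduce the condition ``for all $j\in C$'' in Definition~\ref{def:mpmkc} to the single item of minimum weight, then rewrite the resulting strict inequality using integrality.

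First, for every $j\in C$ we have $w(C\setminus\{j\})=w(C)-w_j$. This quantity is largest when $w_j$ is smallest, that is, for $j^*\in\arg\min_{j\in C}w_j$, which exists because $C\neq\emptyset$. Hence $w(C)-w_j<d$ holds for all $j\in C$ if and only if it holds for $j=j^*$. This is exactly the statement that $C$ is minimal if and only if $w(C)-\min_{j\in C}w_j<d$.

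For the integral form, $w(C)$, $d$ and $\min_{j\in C}w_j$ are all integers. So the strict inequality $w(C)-\min_{j\in C}w_j<d$ is equivalent to $w(C)\le d+\min_{j\in C}w_j-1$. Combining this with the cover condition $w(C)\ge d$ gives the two-sided bound $d\le w(C)\le d+\min_{j\in C}w_j-1$. Conversely, this bound implies both that $C$ is a cover and that $C$ is minimal, so the two formulations are equivalent.

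There is no real obstacle; the only point needing care is the nonemptiness of $C$, which is required so that the minimum is defined.
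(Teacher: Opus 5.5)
Your proposal is correct and follows the paper's argument: the paper's one-line proof is exactly your observation that removing the lightest selected item leaves the largest residual weight $w(C)-w_j$, so minimality reduces to that single item. Your extra step on integrality, that the strict inequality becomes $w(C)\le d+\min_{j\in C}w_j-1$ and combines with $w(C)\ge d$, just spells out what the paper leaves implicit.
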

\begin{proof}
Removing the lightest selected item leaves the largest residual weight.
\qed
\end{proof}

With the same order of nondecreasing weights, fixing the first selected item $i$ of a cover, so that $C\cap\{1,\ldots,i-1\}=\emptyset$, leaves the two-sided constraint $d-w_i\le w(C_{>i})\le d-1$ on $C_{>i}=C\cap\{i+1,\ldots,n\}$,
a window of the same width $w_i-1$.
Figure~\ref{fig:decomposition} displays the two residual windows.

\begin{figure}[htbp]
\centering
\begin{tikzpicture}[
  font=\small,
  selected/.style={draw,fill=black!75,text=white,minimum height=7mm},
  excluded/.style={draw,fill=white,minimum height=7mm},
  optional/.style={draw,dashed,fill=black!10,minimum height=7mm}
]
  % Maximal packing (top panel)
  \node[font=\bfseries] at (3.2,1.75) {Maximal packing};
  \node[selected,minimum width=24mm] at (1.2,1) {$1,\ldots,i-1$};
  \node[excluded,minimum width=13mm] at (3.25,1) {$i$};
  \node[optional,minimum width=26mm] at (5.4,1) {$i+1,\ldots,n$};
  \node[font=\footnotesize] at (1.2,0.42) {forced in};
  \node[font=\footnotesize] at (3.25,0.42) {first excluded};
  \node[font=\footnotesize] at (5.4,0.42) {optional};
  \node at (3.2,-0.2) {$c-w_{<i}-w_i+1\le w(P_{>i})\le c-w_{<i}$};
  % Minimal cover (bottom panel)
  \node[font=\bfseries] at (3.2,-1.15) {Minimal cover};
  \node[excluded,minimum width=24mm] at (1.2,-1.9) {$1,\ldots,i-1$};
  \node[selected,minimum width=13mm] at (3.25,-1.9) {$i$};
  \node[optional,minimum width=26mm] at (5.4,-1.9) {$i+1,\ldots,n$};
  \node[font=\footnotesize] at (1.2,-2.48) {forced out};
  \node[font=\footnotesize] at (3.25,-2.48) {first selected};
  \node[font=\footnotesize] at (5.4,-2.48) {optional};
  \node at (3.2,-3.1) {$d-w_i\le w(C_{>i})\le d-1$};
\end{tikzpicture}
\caption{Two-sided residual constraints after ordering items by nondecreasing
weight and fixing the first excluded or selected item.  Dark boxes are
selected, white solid boxes are excluded, and light dashed boxes are optional.  The
packing panel applies when at least one item is excluded.}
\label{fig:decomposition}
\end{figure}

\begin{corollary}[Exact filling]
\label{cor:exactfill}
\begin{enumerate}[label=(\alph*),leftmargin=2em]
  \item If a maximal packing $P$ for $c$ excludes an item of weight $1$,
        then $w(P)=c$.
  \item If a minimal cover $C$ for $d$ contains an item of weight $1$, then
        $w(C)=d$.
\end{enumerate}
\end{corollary}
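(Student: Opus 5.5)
Both parts follow directly from the single-item characterisations in Lemma~\ref{lem:maxchar} and Lemma~\ref{lem:minchar}, combined with integrality of the weights, capacity and threshold. The plan is to instantiate each lemma with the unit-weight item and observe that the resulting window of admissible loads collapses to a single value.

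For part~(a), let $P$ be a maximal packing for $c$ and let $j\in I\setminus P$ with $w_j=1$. Then $P\subsetneq I$, so Lemma~\ref{lem:maxchar} applies. Since weights are positive integers, $\min_{k\in I\setminus P}w_k=1$. The integral form of the lemma then gives
\[
c-1+1\le w(P)\le c,
\]
that is, $w(P)=c$. Equivalently, one can argue directly: $w(P)\le c$ by feasibility, and $w(P)+1>c$ by maximality applied to $j$. Because both $w(P)$ and $c$ are integers, this forces $w(P)=c$.

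For part~(b), let $C$ be a minimal cover for $d$ containing an item $j$ with $w_j=1$. Then $C\neq\emptyset$, and $\min_{k\in C}w_k=1$. Lemma~\ref{lem:minchar} gives
\[
d\le w(C)\le d+1-1=d,
\]
hence $w(C)=d$. The direct argument is symmetric: $w(C)\ge d$ by the covering condition, and $w(C)-1<d$ by minimality applied to $j$, so $w(C)=d$ by integrality.

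There is no real obstacle here. The only points to check are that the hypotheses of the lemmas hold, namely $P\neq I$ in~(a), which holds because $j$ is excluded, and $C\neq\emptyset$ in~(b), which holds because $j\in C$. The integrality of $c$, $d$ and the weights is exactly what turns the strict inequality into an equality.
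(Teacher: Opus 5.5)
Your proof is correct and follows the paper's argument: part~(a) is Lemma~\ref{lem:maxchar} applied with the lightest excluded weight equal to $1$, and part~(b) combines the minimality inequality $w(C)-1<d$ with integrality. The hypotheses you check explicitly, $P\neq I$ and $C\neq\emptyset$, are the only ones the lemmas require.
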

\begin{proof}
(a) is Lemma~\ref{lem:maxchar} with $\min_{j\in I\setminus P}w_j=1$.  (b) If $i\in C$
and $w_i=1$, minimality gives $w(C)-1<d$, hence $w(C)\le d$ by integrality.
\qed
\end{proof}

\begin{example}
Let four items have weights $1,2,4,5$, and identify each item by its
weight.  For capacity $c=6$, both $P_1=\{2,4\}$ and $P_2=\{1,4\}$ are
maximal packings: $w(P_1)=6$, while $w(P_2)=5$ and neither excluded item
fits.  Since $w(I)=12$, the paired threshold is $d=w(I)-c=6$.
The complements $C_1=\{1,5\}$ and $C_2=\{2,5\}$ are minimal covers,
of weights $6$ and $7$, respectively.  Thus maximality and minimality
allow slack and excess, while the excluded or selected weight-$1$ item
forces exact filling in $P_1$ and $C_1$.
\end{example}

Corollary~\ref{cor:exactfill} is the mechanism behind every hardness result
of Section~\ref{sec:hard}: a unit-weight item turns the knapsack inequality
into a subset-sum equality.  Note where the unit item sits: outside the
packing, inside the cover.

\subsection{Complementarity}
\label{sec:complement}

The two problems are two views of the same family of sets.  This allows us
to transfer exact algorithms and hardness reductions from one side to the
other, but, as we show next, not approximation guarantees.

\begin{proposition}[Complementarity; \citealp{FLS2017}]
\label{prop:complement}
Let $0<c<w(I)$ and $d=w(I)-c$.  A set $P\subseteq I$ is a maximal
packing for $c$ if and only if $C=I\setminus P$ is a minimal cover for $d$.
Moreover $v(C)=v(I)-v(P)$, hence $\optC=v(I)-\optP$ for the two paired
instances.
\end{proposition}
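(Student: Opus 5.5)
The plan is a direct verification from Definitions~\ref{def:mcmkp} and~\ref{def:mpmkc}, using the single identity $w(I\setminus P)=w(I)-w(P)$. Fix $P\subseteq I$ and put $C=I\setminus P$, $d=w(I)-c$. The assumption $0<c<w(I)$ gives $d>0$, so $d$ is a legitimate positive threshold.

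First I will match the feasibility conditions. We have $w(P)\le c$ if and only if $w(C)=w(I)-w(P)\ge w(I)-c=d$. So $P$ is a packing for $c$ exactly when $C$ is a cover for $d$.

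Next I will match the extremality conditions. Maximality of $P$ says $w(P)+w_j>c$ for every $j\in I\setminus P=C$. Rewriting with $w(P)=w(I)-w(C)$, this becomes $w(C)-w_j<w(I)-c=d$, that is, $w(C\setminus\{j\})<d$ for every $j\in C$. This is precisely minimality of $C$. The quantifiers range over the same index set $C$, so the equivalence holds term by term. Empty-set edge cases need no separate treatment: if $C=\emptyset$, then $P=I$ is not a packing because $c<w(I)$, and $C$ is not a cover because $d>0$, so both sides fail together. This proves the first claim.

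The value statement follows from additivity: $v(C)=v(I)-v(P)$. Complementation is a bijection between maximal packings for $c$ and minimal covers for $d$, and it maps $v$ affinely and decreasingly. Hence minimising $v(P)$ over the first family is the same as maximising $v(C)=v(I)-v(P)$ over the second, which gives $\optC=v(I)-\optP$. Both families are nonempty: a maximal packing exists because greedily adding items to $\emptyset$ always ends in one. So both optima are attained and the identity is not vacuous.

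I expect no real obstacle. The only point needing care is checking that the quantifier sets coincide (excluded items of $P$ are exactly the selected items of $C$) and handling the boundary cases $c=0$ or $c=w(I)$, which the hypothesis excludes.
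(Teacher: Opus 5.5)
Your proof is correct and follows the paper's argument: the identity $w(I\setminus P)=w(I)-w(P)$ turns $w(P)\le c$ into $w(C)\ge d$, and, for each $j\in C$, turns $w(P)+w_j>c$ into $w(C\setminus\{j\})<d$. Your remarks on the empty-set case and on the nonemptiness of the two feasible families are correct details that the paper leaves implicit.
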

\begin{proof}
$w(P)\le c$ if and only if $w(C)\ge d$, and for $j\in C$,
$w(P)+w_j>c$ if and only if $w(C\setminus\{j\})<d$.
\qed
\end{proof}

\begin{remark}[Complementation is affine, not multiplicative]
\label{rem:affine}
An $\alpha$-approximation $P$ for \MCMKP gives
$v(I\setminus P)\ge\optC-(\alpha-1)\optP$, an additive guarantee measured
against the wrong optimum.  Example~\ref{ex:affine} shows that no multiplicative guarantee
survives.
Approximation algorithms therefore do not transfer between the two
problems; exact algorithms and reduction gaps do.
\end{remark}
\begin{example}
\label{ex:affine}
Let $m\ge2$, unit costs, $m$ items of weight $1$, $m$ items of weight $m+1$,
one item of weight $m(m+1)$, and $c=m(m+1)+m$.  There are two maximal
packings: the heavy item with the $m$ unit-weight items, of cost $m+1$, and
the $2m$ other items, which weigh exactly $c$, of cost $2m$.  The latter is
a factor-$2$ approximation of $\optP=m+1$, but its complement, the heavy item
alone, is a minimal cover of profit $1$, whereas $\optC=m$.
\end{example}

\subsection{Tools from the literature}

We recall four results that are used later: the approximation algorithms
for the proportional restrictions, which serve as subroutines in
Section~\ref{sec:values}, the exact dynamic program, and
the approximation scheme for the minimum-filling knapsack on which
Section~\ref{sec:fpt} is built.

\begin{theorem}[\citealp{GMP2013}]
\label{thm:gmp-lbp}
The \LBP admits a linear-time $4/3$-approximation and an FPTAS.
\end{theorem}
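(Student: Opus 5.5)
This result is cited from \citet{GMP2013}, so the plan is to reconstruct their argument using only the structure developed above, specialised to $v_i=w_i$. In this case the objective of \MCMKP is the load $w(P)$ itself. By Lemma~\ref{lem:maxchar}, any maximal packing has load in the window $[c-w_j+1,c]$, where $j$ is its smallest excluded item. The plan splits into a simple greedy bound, a refinement to $4/3$, and an FPTAS obtained by guessing the first excluded item.

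\emph{Linear-time $4/3$-approximation.} Sort the items by weight; linear time is then recovered with weighted-median selection. Let $P^\ast$ be optimal and let $j^\ast$ be its first excluded item.

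\begin{enumerate}[label=(\alph*),leftmargin=2em]
  \item A trivial candidate is the Shortest Job First packing, which inserts $1,2,\ldots$ while they fit. Its load is at most $c$. Every maximal packing has load above $c-w_{\max}$, and this already gives ratio $2$ whenever $\optP\ge c/2$.
  \item For the $4/3$ bound, also take the Longest Job First maximal packing, and the packing built by forcing one large excluded item and filling greedily.
  \item Split into cases on $w_{j^\ast}$ relative to $c/3$:
  \begin{itemize}
    \item If $w_{j^\ast}\le c/3$, every maximal packing has load greater than $2c/3$, while $\optP>2c/3$ too. Any candidate of load at most $c$ is then within $3/2$; this is not enough. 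The refinement uses that $P^\ast$ contains all items lighter than $w_{j^\ast}$, which raises the lower bound.
    \item If $w_{j^\ast}>c/3$, at most two items heavier than $c/3$ fit. Enumerating which of them are packed is constant work after sorting.
  \end{itemize}
\end{enumerate}

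I expect the exact case analysis reaching $4/3$ to be the main obstacle, and I would rely on \citet{GMP2013} for its details.

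\emph{FPTAS.} Guess the first excluded item $i$ of the optimum, which gives $n$ choices. Items $1,\ldots,i-1$ are forced in, and items after $i$ must satisfy the residual constraint $\ell_i\le w(P_{>i})\le u_i$. Since the objective equals the weight, the task is to find the smallest achievable subset sum of $\{i+1,\ldots,n\}$ that is at least $\ell_i$ and at most $u_i$.

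Run a trimmed subset-sum list in the style of the \SubsetSum FPTAS, rounding sums upward to a geometric grid of ratio $1+\varepsilon/n$. Keep, for each grid cell, the smallest true sum that is at least the cell's lower end. A true sum below $u_i$ is thus represented within factor $1+\varepsilon$ from above. The delicate point is that a rounded representative might exceed $u_i$, which would violate feasibility. To avoid this, keep both the minimum and the maximum true sum in each cell. Then a feasible sum in $[\ell_i,u_i]$ is either matched by a stored true value inside the window, or it forces the cell's minimum to lie within factor $1+\varepsilon$ of it.

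Each candidate is an exact maximal packing, because its excluded item $i$ does not fit: $w(P)\ge c-w_i+1$. The total running time is polynomial in $n$ and $1/\varepsilon$.
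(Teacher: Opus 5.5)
The paper gives no proof of this statement. It quotes the result from \citet{GMP2013} and uses it as a black box. Your $4/3$ part also ends in a pointer to that paper, so it is not a reconstruction; for instance, ``enumerating which items heavier than $c/3$ are packed is constant work'' is unjustified, since there may be $\Theta(n)$ such items. The self-contained content of your proposal is the FPTAS, and there the argument has a genuine gap at exactly the point you call delicate.

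Keeping the smallest sum of each cell represents sums from below, so the danger is dropping under $\ell_i$, not exceeding $u_i$. Keeping both the minimum and the maximum only yields, after the $n$ rounds, list values $a\le y^*\le b$ with $b/a\le(1+\varepsilon/n)^{2n}$, where $y^*=w(P^*_{>j^*})$. Nothing prevents $a<\ell_{j^*}\le y^*\le u_{j^*}<b$. Then the list has no value in the window, and the correct guess produces no candidate. A cell minimum close to $y^*$ does not help, because a residual below $\ell_{j^*}$ means that item $j^*$ fits. Finer rounding cannot repair this in general: for $w_i=1$ the window is the single point $\ell_i=u_i$, and finding a feasible residual is \SubsetSum.

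The missing idea is that narrow windows need not be solved, because the objective is the load itself. This is the packing-side analogue of the case split in the proof of Theorem~\ref{thm:gbpfptas}. Add any maximal packing $P^0$ (Proposition~\ref{prop:trivial}) to the candidates.
\begin{itemize}
  \item If $\optP\ge c/(1+\varepsilon)$, then $w(P^0)\le c\le(1+\varepsilon)\optP$.
  \item Otherwise $u_{j^*}-y^*=c-\optP>\varepsilon\optP\ge\varepsilon y^*$. Use a one-sided list that rounds upward: keep the maximum of each cell, discard sums above $u_i$, and take grid ratio $1+\varepsilon/(2n)$ with $\varepsilon\le1$. By the usual induction it contains some $b$ with $y^*\le b\le(1+\varepsilon)y^*<u_{j^*}$. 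This $b$ lies in the window automatically and gives load $w_{<j^*}+b\le(1+\varepsilon)\optP$.
\end{itemize}
With this split the two-sided bookkeeping disappears. The rest of your plan then goes through: guessing the first excluded item, and certifying maximality by $w(P)\ge c-w_i+1$.
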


\begin{theorem}[\citealp{GMP2013}]
\label{thm:maxgreedy}
Max Greedy sorts the items by nonincreasing weight and returns the shortest
prefix that is a cover.  If it returns $t$ items, its weight is at least
$\frac{t}{t+1}\optC$; in particular it is a factor-$2$ approximation for
\GBP, and the factor is tight.
\end{theorem}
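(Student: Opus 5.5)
We argue directly from Lemma~\ref{lem:minchar}, comparing the Max Greedy output with an arbitrary minimal cover; splitting on the cardinality of that cover gives the bound. Sort the items so that $w_1\ge w_2\ge\cdots\ge w_n$, and let $G=\{1,\ldots,t\}$ be the shortest prefix with $w(G)\ge d$. Such a prefix exists because $d\le w(I)$ on nontrivial instances. Write $W=w(G)$.

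\textbf{Feasibility.} The lightest item of $G$ is item $t$. Removing it leaves the prefix $\{1,\ldots,t-1\}$, which has weight below $d$ by the choice of $t$. Hence $W-\min_{j\in G}w_j<d$, and $G$ is a minimal cover by Lemma~\ref{lem:minchar}. Two further facts come from the ordering. First, $W$ is the largest possible total weight of any $t$ items. Second, since $G$ consists of the $t$ heaviest items, $W\ge t\,w_t$; more generally, any $t$ items of weight at least $m$ force $t\,m\le W$.

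\textbf{Upper bound on $\optC$.} Let $C$ be an optimal minimal cover, so $v(C)=w(C)$, and set $k=|C|$ and $m=\min_{j\in C}w_j$. I split on $k$.
\begin{itemize}
\item If $k\le t$, then $w(C)$ is at most the sum of the $k$ heaviest weights, hence $w(C)\le W$.
\item If $k\ge t+1$, then the $t$ heaviest items of $C$ each weigh at least $m$, and together they weigh at most $W$. Hence $t\,m\le W$. Lemma~\ref{lem:minchar} gives $w(C)-m<d\le W$, so
\[
 w(C)<W+m\le W\Bigl(1+\frac1t\Bigr).
\]
\end{itemize}
In both cases $\optC\le\frac{t+1}{t}W$. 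Since $t\ge1$, this gives the factor $2$.

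\textbf{Tightness.} Take $d=k$ with one item of weight $k$ and two items of weight $k-1$. Max Greedy takes the weight-$k$ item alone, so $t=1$ and $W=k$. The pair of weight-$(k-1)$ items is a minimal cover: its weight is $2k-2\ge d$, and dropping either item leaves $k-1<d$. The ratio is therefore $(2k-2)/k\to2$.

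\textbf{Main obstacle.} The only delicate step is the case $k\ge t+1$. There one has to see that minimality bounds the excess $w(C)-d$ by the smallest item $m$ of $C$. Then, because $C$ has more than $t$ items, $m$ is controlled by the average weight $W/t$ of the greedy prefix.
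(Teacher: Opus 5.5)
Your proof is correct. The feasibility argument via Lemma~\ref{lem:minchar} works. In the case $|C|\le t$, greedy is optimal because its prefix consists of the $t$ heaviest items overall. In the case $|C|\ge t+1$, the inequality $t\,m\le W$ combined with minimality gives $w(C)<W+m\le\frac{t+1}{t}W$. Your tightness instance is exactly the paper's $d=N$, $w=(N,N-1,N-1)$.

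The paper does not prove this statement; it imports it from \citet{GMP2013} as a black box. The only in-paper argument to compare with is Theorem~\ref{thm:gbpk} at $k=0$, where $\mathcal{A}_0$ is Max Greedy. That proof splits instead on whether the lightest item $i^*$ of an optimal cover weighs at least $z^*/2$:
\begin{itemize}
\item If it does, the first greedy item already weighs at least $w_{i^*}\ge z^*/2$.
\item If not, minimality gives $z^*<d+z^*/2$, so $d>z^*/2$ and any cover suffices.
\end{itemize}
This split yields only the factor $2$. Its advantage is that it is the $k=0$ case of an argument that works for every $\mathcal{A}_k$. There the completion is restricted to $R(T)$, and the output need not be the heaviest set of its cardinality.

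Your split exploits precisely that Max Greedy's output is the $t$ heaviest items of the whole instance. This is what gives the sharper output-dependent ratio $\frac{t+1}{t}$ that the statement actually claims. It also gives the observation that greedy is exact whenever some optimal cover has at most $t$ items. The price is that your argument does not carry over to the guessing scheme.

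A cosmetic point: you use $k$ both for $|C|$ and for the parameter of the tightness instance. Rename one of them.
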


\begin{theorem}[\citealp{FLS2017}]
\label{thm:fls}
\MCMKP is weakly $\NP$-hard and is solved in $O(nc)$ time by dynamic
programming.  By complementarity, \MPMKC is weakly $\NP$-hard and is
solved in $O(n(w(I)-d))$ time.  The
decomposition following Lemma~\ref{lem:minchar} also solves \MPMKC directly
in $O(nd)$ time.
\end{theorem}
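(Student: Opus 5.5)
The plan is to treat the three assertions separately: hardness by a reduction from \SubsetSum built on exact filling, the $O(nc)$ bound from the decomposition after Lemma~\ref{lem:maxchar}, and the cover-side bounds by Proposition~\ref{prop:complement} and by the symmetric decomposition after Lemma~\ref{lem:minchar}.

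\emph{Weak $\NP$-hardness.} Take a \SubsetSum instance $a_1,\ldots,a_{\tilde n}$ with target $T$, and assume $0<T<\sum_j a_j$, since other instances are trivial. Build one item of weight $2a_j$ and cost $0$ for each $j$, one extra item of weight $1$ and cost $1$, and set $c=2T$. This instance is nontrivial, because $\min_i w_i=1\le c<w(I)$. The claim is that $\optP=0$ if and only if the \SubsetSum instance is a yes-instance.
\begin{itemize}[leftmargin=2em]
  \item Suppose $\optP=0$. A cost-$0$ maximal packing excludes the unit item. By Corollary~\ref{cor:exactfill}(a) its weight is exactly $2T$, so the corresponding $a_j$ sum to $T$.
  \item Conversely, suppose some $a_j$ sum to $T$. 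The corresponding items weigh exactly $c$, so no excluded item fits and this packing is maximal of cost $0$.
\end{itemize}
The numbers are encoded in binary, so this proves weak $\NP$-hardness of \MCMKP. For \MPMKC, apply Proposition~\ref{prop:complement} with $d=w(I)-c$: it maps instances and solutions bijectively and gives $\optC=v(I)-\optP$. Hence deciding $\optC\ge v(I)$ is equally hard.

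\emph{The $O(nc)$ algorithm.} Sort the items so that $w_1\le\cdots\le w_n$ and use the decomposition after Lemma~\ref{lem:maxchar}. Since $c<w(I)$, every maximal packing has a first excluded item $i$. For that $i$, the optimum is $v(\{1,\ldots,i-1\})$ plus a minimum-cost subset of $\{i+1,\ldots,n\}$ with weight in $[\ell_i,u_i]$. Only indices $i$ with $w_{<i}\le c$ are admissible, so $0\le u_i\le c$.

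To obtain all these window minima at once, compute backwards the tables
\[
F_i(s)=\min\{v(S): S\subseteq\{i+1,\ldots,n\},\ w(S)=s\},\qquad 0\le s\le c,
\]
with $F_n(0)=0$ and all other entries of $F_n$ equal to $\infty$, and the recurrence
\[
F_{i-1}(s)=\min\{F_i(s),\,F_i(s-w_i)+v_i\}.
\]
For each $i$, take the minimum of $F_i$ over the window $[\max(\ell_i,0),u_i]$ and add the prefix cost. Each stage and each window scan costs $O(c)$, so the total is $O(nc)$.

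\emph{The cover side.} The $O(n(w(I)-d))$ bound is the packing algorithm run on the complementary instance. The direct $O(nd)$ algorithm is symmetric: fix the first selected item $i$, so that items $1,\ldots,i-1$ are excluded. Then maximise profit over $S\subseteq\{i+1,\ldots,n\}$ with $d-w_i\le w(S)\le d-1$, using exact-weight tables of maximum profit on $0\le s\le d-1$. Separately handle the case where $i$ alone covers, i.e.\ $w_i\ge d$, which corresponds to $S=\emptyset$.

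\emph{Main obstacle.} No step is deep; the hardest part is the bookkeeping of the boundary cases. These are the windows with $\ell_i\le 0$, the indices $i$ with $w_{<i}>c$, and the fact that ties in the sort order do not change which items are forced in or out. The one point to verify carefully is that restricting the DP to exact weights $s\le c$ (respectively $s\le d-1$) loses no feasible residual set.
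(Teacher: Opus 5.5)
Your proposal is correct and uses the same ingredients the paper points to; the paper itself only cites \citet{FLS2017}. Your hardness reduction is the construction of Theorem~\ref{thm:noconst}(a), up to doubling the weights, which is harmless but unnecessary since weights $a_j$ with $c=T$ already work. Your $O(nc)$ bound comes from the first-excluded-item decomposition after Lemma~\ref{lem:maxchar}, with one backward suffix table shared by all candidates $i$, which is what brings the cost down from $O(n^2c)$ to $O(nc)$; the cover-side bounds via Proposition~\ref{prop:complement} and the symmetric first-selected-item decomposition likewise match the paper.
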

The weak $\NP$-hardness has been known since \citet[Theorem~7]{GaiZhang2008},
who proved it already for unit costs; the exact dynamic programs came later,
in $O(n^2c)$ time \citep{FLS2015} and then in $O(nc)$ time \citep{FLS2017}.

\begin{theorem}[\citealp{Xu2013}]
\label{thm:xu}
Consider the problem of maximising $v(P)$ subject to
$\beta c\le w(P)\le c$, for a given $\beta\in[0,1)$.  There is an
algorithm that either reports infeasibility or returns
a feasible set of profit at least $(1-\varepsilon)$ times the optimum, in
time $O\bigl(n^2\log n/(1-\beta)+n^2/[\varepsilon(1-\beta)]\bigr)$.
\end{theorem}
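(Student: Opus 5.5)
Theorem~\ref{thm:xu} is quoted from \citet{Xu2013} and is used later as a black box. If I had to reprove it, I would use the standard profit-scaling route for \KP together with a repair step. The repair step exploits the fact that the admissible window $[\beta c,c]$ has width $(1-\beta)c$, a fixed fraction of $c$.

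First I split the items into \emph{small} items, $w_i\le(1-\beta)c$, and \emph{large} items, $w_i>(1-\beta)c$. Any feasible set contains fewer than $1/(1-\beta)$ large items. The basic observation is a version of Lemma~\ref{lem:maxchar}: if $w(S)<\beta c$, then every item not in $S$ whose weight is at most $(1-\beta)c$ still fits. Hence adding small items one at a time can never jump over the window. As long as unused small items remain, the load either enters $[\beta c,c]$ or all small items end up in $S$.

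Feasibility is then easy to decide. Sort by weight and guess the first item of a feasible set that is excluded from the sorted prefix. This is exactly the decomposition after Lemma~\ref{lem:maxchar}, with $u_i-\ell_i$ replaced by the window width. Together with the monotone filling argument, each guess can be checked in $O(n)$. I expect the $O(n^2\log n/(1-\beta))$ term to come from this part, combined with sorting and with computing a constant-factor lower bound $\mathrm{LB}$ on the optimum. For $\mathrm{LB}$ I would take the best feasible set obtained by fixing one item and filling greedily by profit density.

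For the approximation I proceed in four steps.
\begin{enumerate}[label=(\arabic*),leftmargin=2em]
  \item Guess the heaviest item $h$ of an optimal solution $O$ ($n$ choices) and discard all items heavier than $h$.
  \item Scale the profits by $\varepsilon\,\mathrm{LB}/n$.
  \item Run the profit-indexed dynamic program of \KP, which stores for every scaled profit the minimum weight. Force $h$ into every set, so that each table entry is a set containing $h$ and light items only.
  \item For every table entry with weight at most $c$ and weight below $\beta c$, repair the set by adding unused items of weight at most $w_h$ in nondecreasing order of weight until the load reaches $\beta c$. Profits are nonnegative, so repairing never lowers the profit.
\end{enumerate}
The entry that dominates $O$ has scaled profit at least that of $O$ and weight at most $w(O)$. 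Its true profit is therefore at least $(1-\varepsilon)v(O)$.

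The main obstacle is to show that the repair in step~(4) succeeds for this dominating entry $S$. Repair fails only if every unused item lighter than $w_h$ has weight exceeding $c-w(S)>(1-\beta)c$. I would try an exchange argument against $O$. Since $w(S)\le w(O)$ and $O\subseteq\{\text{items}\le w_h\}$, either some item of $O\setminus S$ fits and makes progress, or $O\setminus S$ consists solely of items larger than the residual capacity. In the latter case $O$ itself has few items. One can then afford to guess all of $O$'s large items, which number fewer than $1/(1-\beta)$, and run the dynamic program only on small items. After that the filling argument above applies verbatim.

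Balancing this guessing so that the running time stays within $O(n^2/[\varepsilon(1-\beta)])$ rather than $n^{O(1/(1-\beta))}$ is the delicate point. I would attempt it by guessing only the number of large items and the smallest large item, and then using the sorted order in the same way as in the decomposition of Figure~\ref{fig:decomposition}.
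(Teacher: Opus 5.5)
The paper does not prove this statement. It is imported from \citet{Xu2013} and used as a black box, so your sketch has to stand on its own, and it fails exactly at the step you call the main obstacle.

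Steps~(3)--(4) rely on the \KP dominance ``among sets of equal scaled profit, keep the lightest''. Under a two-sided constraint this is false: a lighter set may be impossible to lift into $[\beta c,c]$, because the only items left to add weigh more than the window width $(1-\beta)c$ and overshoot $c$. Take $c=100$, $\beta=4/5$ and items $(w_j,v_j)$ given by $h=(40,0)$, $g=(40,1)$, $x=(10,1)$, $y=(15,100)$. Every feasible set contains both $h$ and $g$ (otherwise it weighs at most $65$), and $\OPT=101$ is attained by $\{h,g,y\}$. Whatever the scaling, $x$ and $g$ receive the same scaled profit and $h$ receives zero, so $\{h,x,y\}$ (weight $65$) and $\{g,y\}$ (weight $55$) both have the scaled profit of $\{h,g,y\}$. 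Hence the stored entry at that profit avoids $g$ when $h$ is guessed and avoids $h$ when $g$ is guessed. In both cases your repair reaches weight at most $65$ and then overflows with the remaining item of weight $40$. Every set your algorithm can output has profit at most $2$. The same instance defeats your lower bound: fixing any one item and filling by profit density stalls at weight $65$. Your $O(n)$ feasibility test has the same defect, since monotone filling only covers items of weight at most $(1-\beta)c$. For weights $30,35,40,50$ and window $[85,100]$, no choice of first excluded item followed by filling in sorted order reaches $85$, yet $\{35,50\}$ is feasible.

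Your fallback is to guess all large items of $O$ and run the dynamic program on small items only. This does make the repair sound: adding small items can then neither overshoot $c$ nor run out before $\beta c$, because the small items of $O$ are available. But it enumerates all sets of fewer than $1/(1-\beta)$ large items, i.e.\ $n^{O(1/(1-\beta))}$ guesses, so $1/(1-\beta)$ lands in the exponent. This is the $\XP$-type behaviour of Theorem~\ref{thm:nonsmall}. Avoiding it is the very reason the paper needs Theorem~\ref{thm:xu} in Proposition~\ref{prop:fptas-delta}, where $1/(1-\beta_i)=(d-1)/(w_i-1)$ may grow with the instance. Your refinement, guessing only the number of large items and the smallest one, does not determine which large items are taken, which is what the example turns on, and you give no argument for it.

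One way to supply the missing ingredient is to keep, for every scaled profit, the lightest and the heaviest partial set in each weight interval of length $(1-\beta)c$, tracking weights up to $c+(1-\beta)c$. By induction over the items, every reachable pair of scaled profit $p$ and weight $a$ stays bracketed by two kept sets of profit $p$ with weights $b^-\le a\le b^+$ and $b^+-b^-\le(1-\beta)c$. The reason is that the largest candidate weight $\le a$ and the smallest $\ge a$ either share an interval, whose two extremes are kept, or are themselves the extremes of their intervals; so the gap never accumulates. For $a=w(O)$, one of the two bracketing sets has weight in $[\beta c,c]$ and the scaled profit of $O$. This puts $1/(1-\beta)$ into the number of states per profit level, which with $O(n/\varepsilon)$ profit levels gives the $n^2/[\varepsilon(1-\beta)]$ term. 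You still need a correct constant-factor approximation, or a geometric search over $\OPT$, to fix the scaling.
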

For $\beta=1$, deciding whether a feasible set exists is \SubsetSum,
hence $\NP$-complete; this is why Theorem~\ref{thm:xu} requires $\beta<1$.

% ===========================================================================
\section{Inapproximability results}
\label{sec:hard}
% ===========================================================================

We begin with the negative side of the separation: despite
pseudo-polynomial exact algorithms, approximation can be arbitrarily poor.
All results of this section rest on exact filling
(Corollary~\ref{cor:exactfill}) and on reductions from \SubsetSum: given
positive integers $a_1,\ldots,a_{\tilde n}$ and a target $b$, decide whether
some subset sums to exactly $b$.  We assume $a_i\le b$ for $i=1,\ldots,\tilde n$, and
$b<\sum_{i=1}^{\tilde n}a_i$.

\subsection{Gap reductions and complementation}

Every hardness proof below follows the same pattern: a reduction from
\SubsetSum that separates YES and NO instances by a gap in the optimum.  We
state the pattern once, and then explain how a gap proved for \MCMKP yields
one for \MPMKC.

\begin{lemma}[Gap lemma]
\label{lem:gap}
Let $\alpha\ge1$, and suppose a polynomial-time mapping transforms instances of
\SubsetSum to instances of an optimisation problem, together with two
numbers $g_1,g_2$ computable in polynomial time such that $g_2>\alpha\, g_1$.
If the problem is a minimisation and YES instances have $\OPT\le g_1$ while
NO instances have $\OPT\ge g_2$, or if it is a maximisation and YES
instances have $\OPT\ge g_2$ while NO instances have $\OPT\le g_1$, then
the problem admits no polynomial-time $\alpha$-approximation, unless $\PP=\NP$.
\end{lemma}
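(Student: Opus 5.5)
We argue by contradiction, using the standard gap argument: an $\alpha$-approximation algorithm, combined with the given reduction, would decide \SubsetSum in polynomial time. Since \SubsetSum is $\NP$-complete, this forces $\PP=\NP$. So assume $\PP\neq\NP$ and suppose some polynomial-time algorithm $\mathcal{A}$ returns, on every instance of the optimisation problem, a feasible solution whose value is within factor $\alpha$ of $\OPT$. Here $\alpha$ is the approximation factor in the sense of the terminology of Section~\ref{sec:def}.

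The decision procedure works as follows. Given a \SubsetSum instance, apply the mapping to obtain an instance of the optimisation problem, and compute $g_1,g_2$. Both steps run in polynomial time by hypothesis. Then run $\mathcal{A}$ on the constructed instance and evaluate the value $A$ of the returned solution; this is polynomial because the solution is feasible and its value is a sum of input numbers.

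\emph{Minimisation case.} We output YES if and only if $A\le\alpha g_1$, and check correctness in both directions.
\begin{itemize}
\item On a YES instance, $A\le\alpha\OPT\le\alpha g_1$, so we answer YES.
\item On a NO instance, $A\ge\OPT\ge g_2>\alpha g_1$, since the returned solution is feasible; so we answer NO.
\end{itemize}

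\emph{Maximisation case.} We output YES if and only if $A\ge g_2/\alpha$, where we use $\alpha\ge1>0$.
\begin{itemize}
\item On a YES instance, $A\ge\OPT/\alpha\ge g_2/\alpha$, so we answer YES.
\item On a NO instance, $A\le\OPT\le g_1<g_2/\alpha$, so we answer NO.
\end{itemize}

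In both cases the strict inequality $g_2>\alpha g_1$ is exactly what separates the two outcomes. The whole procedure therefore decides \SubsetSum in polynomial time, contradicting $\PP\neq\NP$.

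There is no real obstacle here, but the comparisons against $\alpha g_1$ and $g_2/\alpha$ need care. If $\alpha$ is not a rational number given explicitly (for instance $2^{p(n)}$), the threshold test might be hard to evaluate. To avoid this we compare $A$ with $g_1$ and $g_2$ in the multiplied form, namely $A\le\alpha g_1$ versus $\alpha A\ge g_2$, and use the facts that $\alpha$ is computable to polynomially many bits in the intended applications and that $g_1,g_2$ are computable. If exact comparison is delicate, we can instead use any rational threshold $t$ with $\alpha g_1<t<g_2$, which exists by the strict inequality. Finally, we note that $g_1$ may be $0$ (zero values): in the minimisation case the argument still works, since then YES forces $A=0$, while NO gives $A\ge g_2>0$.
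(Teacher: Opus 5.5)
Your proof is correct and is the same gap argument as the paper's: because $g_2>\alpha g_1$, the value returned by the approximation lands on opposite sides of the gap for YES and NO instances, so a threshold test on it decides \SubsetSum. The only difference is where you put the threshold. The paper compares the output with $g_2$ in the minimisation case (output $<g_2$ iff YES) and with $g_1$ in the maximisation case (output $>g_1$ iff YES). This needs no knowledge of $\alpha$ at all, which makes your final paragraph about evaluating $\alpha g_1$ or $g_2/\alpha$ unnecessary. That is a real simplification, since in Theorem~\ref{thm:unitgap} the factor $\alpha=n^{1-\varepsilon}$ is typically irrational.
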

\begin{proof}
In the minimisation case an $\alpha$-approximation returns at most
$\alpha g_1<g_2$ on YES instances and at least $g_2$ on NO instances; in the
maximisation case it returns at least $g_2/\alpha>g_1$ on YES instances and
at most $g_1$ on NO instances.  Comparing the output with $g_2$, or with
$g_1$, decides \SubsetSum.
\qed
\end{proof}
The factor $\alpha$ may depend on the constructed instance, in
particular on its size.

\paragraph{Complementing a gap.}
Pair the constructed instance with its complement as in
Proposition~\ref{prop:complement}, so that $\optC=v(I)-\optP$.  A gap
$\optP\le g_1$ on YES instances versus $\optP\ge g_2$ on NO instances
becomes the gap $\optC\ge v(I)-g_1$ versus $\optC\le v(I)-g_2$ for the
maximisation problem.  Every hardness result below is proved on the
packing side and carried to the cover side in this way.

\subsection{\vseven{Inapproximability with positive and nonnegative values}}

\vseven{The first result rules out approximation within $2^{p(n)}$ for
every polynomial $p$ even when all values are positive.  Allowing zero
values strengthens this to the absence of any multiplicative guarantee.}
The construction adds a single
item of weight $1$ to a \SubsetSum instance; by exact filling, this item
can be left out only if the capacity is filled exactly.

\begin{theorem}
\label{thm:noconst}
Unless $\PP=\NP$:
\begin{enumerate}[label=(\alph*),leftmargin=2em]
  \item \vseven{neither \MCMKP nor \MPMKC admits a polynomial-time
        $\alpha$-approximation, for any $\alpha\ge1$;}
  \item if all values are required to be positive, neither \MCMKP nor
        \MPMKC admits a polynomial-time $2^{p(n)}$-approximation, for any
        polynomial $p$.
\end{enumerate}
\end{theorem}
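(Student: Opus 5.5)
We prove both parts on the packing side by a single gap reduction from \SubsetSum, and then transfer the gaps to \MPMKC by complementation as described after Lemma~\ref{lem:gap}. Given $a_1,\ldots,a_{\tilde n}$ and $b$, we build $n=\tilde n+1$ items. For $i=1,\ldots,\tilde n$ the \SubsetSum item has weight $a_i$ and small cost $\sigma\ge0$. One extra item $u$ has weight $1$ and large cost $M$. The capacity is $c=b$. Since $a_i\le b<\sum_i a_i$ and $w_u=1\le b$, we have $\min_i w_i\le c<w(I)$, so the instance is nontrivial.

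The key step uses Corollary~\ref{cor:exactfill}(a). A maximal packing that excludes $u$ must weigh exactly $b$, so it is a YES certificate and contains only \SubsetSum items. Conversely, on a YES instance any subset $S$ with $a(S)=b$ is a maximal packing, because it is full and every weight is at least $1$. Hence YES instances have $\optP\le\tilde n\sigma$. On a NO instance every maximal packing contains $u$, so $\optP\ge M$; maximal packings always exist, for instance by greedy completion.

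For part (a) we take $\sigma=0$ and $M=1$. Then YES gives $\optP=0$ and NO gives $\optP\ge1$. Any $\alpha$-approximation must return a value $0$ exactly on YES instances, so Lemma~\ref{lem:gap} applies with $g_1=0$ and $g_2=1>\alpha\cdot 0$.

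For the cover side we complement. Proposition~\ref{prop:complement} gives $\optC=v(I)-\optP$ with $v(I)=1$, so YES instances have $\optC=1$ and NO instances have $\optC\le0$, hence $\optC=0$. Any $\alpha$-approximation returns a positive value exactly on YES instances.

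For part (b) we take $\sigma=1$ and $M=\tilde n\,2^{p(n)}+1$. This integer has polynomially many bits, so the reduction is polynomial. The gap is $g_1=\tilde n$ versus $g_2=M>2^{p(n)}g_1$, and Lemma~\ref{lem:gap} finishes the packing side.

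The main obstacle is the cover side of part (b), because complementing positive costs gives the gap $v(I)-\tilde n$ versus $v(I)-M$. This ratio is only about $M/(M-\tilde n)$, not exponential. So we need a separate, directly designed cover instance, built in the same spirit with Corollary~\ref{cor:exactfill}(b):
\begin{itemize}
\item \SubsetSum items of weight $a_i$ and profit $1$;
\item a unit-weight item $u$ of huge profit $M$;
\item threshold $d=b+1$.
\end{itemize}

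On this instance, a minimal cover containing $u$ has weight exactly $b+1$. Its remaining items therefore sum to $b$, so it exists iff the instance is YES. Conversely, on a YES instance the set $S\cup\{u\}$ is a minimal cover: removing $u$ drops the weight below $d$, and removing any $a_i\ge1$ does too. So YES instances have $\optC\ge M$.

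On a NO instance every minimal cover avoids $u$ and thus has profit at most $\tilde n$. Here $d\le w(I)$ keeps the instance nontrivial, and a minimal cover exists. We again choose $M=\tilde n\,2^{p(n)}+1$, and Lemma~\ref{lem:gap} gives the claim.

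Part (a) on the cover side could equally use this construction with zero profits on the \SubsetSum items. I would double-check the boundary conditions: that the chosen $M$ keeps the input size polynomial, and that nontriviality holds in both instances.
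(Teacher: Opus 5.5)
Your proof is correct. On the packing side of both parts, and on the cover side of part~(a), it is the paper's proof: the instance with $c=b$ and one unit-weight item, exact filling via Corollary~\ref{cor:exactfill}(a), and complementation with $v(I)=1$.

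The only deviation is the cover side of part~(b), and the reason you give for it is wrong. With $\sigma=1$ you have $v(I)=\tilde n+M$. Complementing the packing gap therefore gives $\optC\ge v(I)-\tilde n=M$ on YES instances and $\optC\le v(I)-M=\tilde n$ on NO instances. This is the same ratio $M/\tilde n>2^{p(n)}$ as on the packing side, not $M/(M-\tilde n)$. It is exactly how the paper concludes, so no separate instance is needed.

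Your direct cover instance with $d=b+1$ is nevertheless valid. The exact-filling and minimality arguments via Corollary~\ref{cor:exactfill}(b) are right, and $d\le\sum_i a_i<w(I)$ gives nontriviality. In fact its complement is the packing instance with capacity $w(I)-d=\sum_i a_i-b$ and the same values. That is your own packing construction applied to the equivalent \SubsetSum target $\sum_i a_i-b$. So the detour costs nothing but also buys nothing beyond a self-contained check of the cover side.

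One small point: to call $M=\tilde n\,2^{p(n)}+1$ an integer computable in polynomial time, first replace $p$ by a polynomial with nonnegative integer coefficients that dominates it. This only strengthens the claim, and the paper does it explicitly.
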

\begin{proof}
(a) Given $(a_1,\ldots,a_{\tilde n};b)$, let $c=b$, create an item of
weight $a_i$ and cost $0$ for each $i\in\{1,\ldots,\tilde n\}$, and a
special item $s$ with $w_s=1$ and $v_s=1$.  The instance is nontrivial since
$\sum_{i=1}^{\tilde n}a_i>b$.  If some subset sums to $b$, the
corresponding items fill the knapsack exactly and form a maximal packing of
cost $0$.  A maximal packing that excludes $s$ has weight exactly $b$ by
Corollary~\ref{cor:exactfill}(a) and consists of original items, so it
solves the \SubsetSum instance.  Hence $\optP=0$ on YES instances and
$\optP=1$ on NO instances.  Since $1>\alpha\cdot0$ for every $\alpha$,
Lemma~\ref{lem:gap} applies with $g_1=0$ and $g_2=1$.  On the cover side,
complementation gives $v(I)=1$, so $\optC=1$ on YES instances and $\optC=0$
on NO instances, and Lemma~\ref{lem:gap} applies with the same $g_1$ and
$g_2$.

(b) We may assume that $p$ has only nonnegative integer coefficients, since
increasing $p$ only strengthens the statement.  Take the same items, now
with cost $1$ for each original item and $v_s=(\tilde n+1)\,2^{p(n)}$,
where $n=\tilde n+1$.  The argument of (a) gives $\optP\le\tilde n$ on YES
instances and $\optP\ge v_s>2^{p(n)}\,\tilde n$ on NO instances.  On the
cover side $v(I)=\tilde n+v_s$, so $\optC\ge v_s$ on YES instances and
$\optC\le\tilde n$ on NO instances.  In both cases Lemma~\ref{lem:gap}
applies with $g_1=\tilde n$, $g_2=v_s$ and $\alpha=2^{p(n)}$.  The number
$v_s$ has $O(p(n)+\log\tilde n)$ bits, so the constructed instance has
polynomial size and is built in polynomial time.  As usual, running times
are measured in the binary encoding length, in which arithmetic on numbers
of polynomially many bits takes polynomial time.
\qed
\end{proof}

\subsection{\texorpdfstring{An $n^{1-\varepsilon}$ lower bound for unit costs and profits}{A lower bound for unit costs and profits}}
\label{sec:unitgap}

\vseven{The previous constructions rely on values that differ widely:
an exponentially large cost in part~(b), and zero against positive costs
in part~(a).}
This is not needed: with all costs equal to one the same
mechanism, applied with many unit-weight items instead of one, still
gives a lower bound polynomial in $n$.

\begin{theorem}
\label{thm:unitgap}
For every fixed $0<\varepsilon<1$, unless $\PP=\NP$:
\begin{enumerate}[label=(\alph*),leftmargin=2em]
  \item unit-cost \MCMKP admits no polynomial-time
        $n^{1-\varepsilon}$-approximation;
  \item unit-profit \MPMKC admits no polynomial-time
        $n^{1-\varepsilon}$-approximation.
\end{enumerate}
\end{theorem}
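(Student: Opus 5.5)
The plan is a gap reduction from \SubsetSum via Lemma~\ref{lem:gap}, following the construction of \citet[Theorem~7]{GaiZhang8} with the tiny jobs rescaled to integer unit weights. Their number is raised to $N=\tilde n^k$ for a constant integer $k>1/\varepsilon$, say $k=\lceil 2/\varepsilon\rceil$. Given $(a_1,\ldots,a_{\tilde n};b)$, I will create, with unit costs throughout:
\begin{itemize}
\item one \emph{big} item of weight $Na_i$ for each $i$;
\item $N$ \emph{unit} items of weight $1$;
\item capacity $c=Nb$.
\end{itemize}
Then $n=\tilde n+\tilde n^k$, which is polynomial because $k$ is fixed. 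The instance is nontrivial: $\min w_i=1\le c$, and $w(I)\ge N\sum_i a_i>Nb$.

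\emph{YES instances.} Let $S$ be a subset with $\sum_{i\in S}a_i=b$. The corresponding big items have total weight exactly $c$, so no further item fits and they form a maximal packing. Its cost is $|S|\le\tilde n$, hence $\optP\le\tilde n$.

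\emph{NO instances.} Let $P$ be a maximal packing that excludes at least one unit item. By Corollary~\ref{cor:exactfill}(a), $w(P)=Nb$. Write $w(P)=N\sigma+u$, where $\sigma$ is the sum of the $a_i$ over the big items in $P$ and $u<N$ is the number of unit items in $P$. Then $u\equiv 0 \pmod N$, so $u=0$ and $\sigma=b$, contradicting the NO answer. Hence every maximal packing contains all $N$ unit items, and $\optP\ge N=\tilde n^k$.

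This gives $g_1=\tilde n$ and $g_2=\tilde n^k$ with ratio $\tilde n^{k-1}$. It remains to check $\tilde n^{k-1}>n^{1-\varepsilon}$, which is the only step needing care. I bound
\[
n^{1-\varepsilon}\le(2\tilde n^k)^{1-\varepsilon}\le 2\,\tilde n^{k-k\varepsilon}.
\]
Since $k\varepsilon\ge 2$, the right-hand side is at most $2\tilde n^{k-2}$, which is below $\tilde n^{k-1}$ as soon as $\tilde n\ge3$. Instances with $\tilde n<3$ can be decided by brute force in constant time and mapped to fixed YES/NO instances. Lemma~\ref{lem:gap} with $\alpha=n^{1-\varepsilon}$ then proves~(a); as remarked after that lemma, $\alpha$ may depend on the instance size.

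For~(b) I will complement the same instance as in the paragraph on complementing a gap. Set $d=w(I)-c$; with unit profits, $v(I)=n$. On YES instances $\optC\ge n-\tilde n=\tilde n^k$, and on NO instances $\optC\le n-N=\tilde n$. This is the same pair $g_1=\tilde n$, $g_2=\tilde n^k$, so the exponent calculation above applies unchanged.

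The main obstacle I expect is ensuring that the modular argument truly forces all unit items in on NO instances. This holds because the big weights are multiples of $N$ while fewer than $N$ unit items can be packed whenever one is excluded. The asymptotic choice of $k$ relative to $\varepsilon$ is the other point that needs care.
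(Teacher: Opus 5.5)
Your proof is correct and follows the paper's argument essentially verbatim. It uses the same gap reduction from \SubsetSum with $\tilde n^k$ unit-weight items, the same exact-filling-plus-modular argument forcing all unit items into every maximal packing on NO instances, and the same complementation for part~(b). The only differences are cosmetic: you scale by $N=\tilde n^k$ rather than the paper's $q=m+1$, which suffices because a packing that excludes a unit item contains at most $N-1$ of them, and your choice $k=\lceil 2/\varepsilon\rceil$ gives the explicit threshold $\tilde n\ge3$ where the paper only states an unspecified constant.
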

\begin{proof}
(a) Fix an integer $k>1/\varepsilon$, and let $m=\tilde n^k$ and $q=m+1$.  The
instance has capacity $c=qb$, an original item of weight $qa_i$ for each $i\in\{1,\ldots,\tilde n\}$,
and $m$ artificial items of weight $1$; all costs are $1$ and
$n=\tilde n+m$.  On a YES instance the original items of a solution fill the
knapsack exactly, so $\optP\le\tilde n$.  On a NO instance, let $P$ be a
maximal packing that misses an artificial item, let $A$ index its original
items and $t\le m-1$ be the number of its artificial items.  By
Corollary~\ref{cor:exactfill}(a), $q\sum_{i\in A}a_i+t=qb$; reducing modulo
$q>t$ gives $t=0$ and $\sum_{i\in A}a_i=b$, a contradiction.  Hence
$\optP\ge m$ on NO instances.  Since $n\le2\tilde n^k$,
\[
 n^{1-\varepsilon}\,\tilde n\ \le\ 2^{1-\varepsilon}\,\tilde n^{\,k(1-\varepsilon)+1},
\]
which is smaller than $\tilde n^k=m$ for $\tilde n$ larger than a constant
$\tilde N_\varepsilon$, because $k\varepsilon>1$; smaller instances are solved by
enumeration.  For fixed $\varepsilon$ the construction is polynomial, and
Lemma~\ref{lem:gap} with $\alpha=n^{1-\varepsilon}$ applies.

(b) Complementation keeps unit profits and gives $v(I)=n$, so $\optC\ge m$ on
YES instances and $\optC\le\tilde n$ on NO instances; the required
inequality $m>n^{1-\varepsilon}\tilde n$ is the one just proved.
\qed
\end{proof}
As usual for bounds of this form, the degree $k$ depends on $\varepsilon$
only, and each $\varepsilon$ has its own reduction.
The gap mechanism is depicted in Figure~\ref{fig:unitgap-mechanism} and
instantiated numerically in Example~\ref{ex:unitgap}.

\begin{figure}[htbp]
\centering
\begin{tikzpicture}[
  font=\small,
  result/.style={draw,fill=black!10,align=center,minimum width=62mm,
    minimum height=9mm}
]
  \node[font=\bfseries] at (3.3,2.25) {YES: a subset sums to $b$};
  \node at (3.3,1.70) {original items can fill $c=qb$ exactly};
  \draw[thick,->] (0.5,0.75)--(6.5,0.75);
  \draw[thick] (5.7,0.55)--(5.7,0.95);
  \fill (5.7,0.75) circle (2pt);
  \node[font=\footnotesize] at (0.8,0.28) {$0$};
  \node[font=\footnotesize] at (5.7,0.28) {$c$};
  \node[result] at (3.3,-0.55)
    {no unit items needed $\Rightarrow$ cost at most $\tilde n$};

  \begin{scope}[shift={(-7.8,-4.0)}]
  \node[font=\bfseries] at (11.1,2.25) {NO: no subset sums to $b$};
  \node at (11.1,1.70) {feasible original sum is at most $c-q$};
  \draw[thick,->] (8.0,0.75)--(14.4,0.75);
  \draw[thick] (8.9,0.55)--(8.9,0.95);
  \draw[thick] (12.8,0.55)--(12.8,0.95);
  \draw[thick] (13.8,0.55)--(13.8,0.95);
  \draw[->,thick] (8.9,1.18)--(12.8,1.18);
  \node[font=\footnotesize,fill=white,inner sep=1pt] at (10.85,1.18)
    {$+\,m=q-1$ unit items};
  \node[font=\footnotesize] at (8.9,0.28) {$c-q$};
  \node[font=\footnotesize] at (12.8,0.28) {$c-1$};
  \node[font=\footnotesize] at (13.8,0.28) {$c$};
  \node[result] at (11.1,-0.55)
    {every maximal packing includes all $m$ unit items\\
     $\Rightarrow$ cost at least $m$};
  \end{scope}
\end{tikzpicture}
\caption{The unit-cost gap construction with $q=m+1$ and capacity $c=qb$.
On a YES instance original items can fill the capacity.  On a NO instance
their total in any packing is a multiple of $q$ below $c$, and even all
$m=q-1$ unit items leave the load below $c$.  An omitted unit item would
therefore fit, so every maximal packing must include all $m$ of them.}
\label{fig:unitgap-mechanism}
\end{figure}

\begin{example}
\label{ex:unitgap}
Let $a=(3,5,7)$ and $k=1$, which is smaller than the proof requires
but keeps the numbers small, so $m=3$ and $q=4$: the weights are
$12,20,28,1,1,1$.  For $b=8$ (YES), $c=32$ and $\{12,20\}$ is a maximal
packing of cost $2$.  For $b=9$ (NO), $c=36$, and a maximal packing avoiding an artificial item
would need $36=4\sigma+t$ with $t\le2$ and $\sigma$ a subset sum of
$\{3,5,7\}$; this forces $t=0$ and $\sigma=9$, which is not a subset sum.
So every maximal packing takes the three artificial items, for instance
$\{28,1,1,1\}$ of cost $4$.
\end{example}

\paragraph{Relation with \citet{GaiZhang2008}.}
The construction of Theorem~\ref{thm:unitgap} is the one used by
\citet[Theorem~7]{GaiZhang2008} for the number-of-jobs objective, where
$\tilde n+1$ tiny jobs of length $1/(\tilde n+1)$ are added; scaling by
$\tilde n+1$ makes them unit-weight items and the original weights multiples
of $\tilde n+1$.  With $\tilde n+1$ tiny items the gap is $\tilde n$ against
$\tilde n+1$, which proves $\NP$-hardness only.  Theorem~\ref{thm:unitgap}
observes that raising the number of tiny items to $\tilde n^k$ costs nothing
and turns the same mechanism into an $n^{1-\varepsilon}$ lower bound, and
that the bound mirrors to the cover side.

\subsection{\texorpdfstring{A matching $n$-approximation}{A matching n-approximation}}

The lower bound of Theorem~\ref{thm:unitgap} is essentially tight, because
in the unit-cost and unit-profit cases every feasible solution is already a good
approximation.

\begin{proposition}
\label{prop:trivial}
Unit-cost \MCMKP and unit-profit \MPMKC admit polynomial-time
$n$-approximations: any maximal packing, found by one greedy insertion
scan, and any minimal cover, found by one greedy deletion scan, are within
factor $n$ of the optimum.
\end{proposition}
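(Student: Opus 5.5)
The plan is to prove two things for each side: the greedy scan runs in polynomial time and returns a feasible solution, and the value of every feasible solution is within a factor $n$ of the optimum. With unit values, $v(S)=|S|$, so the second part comes down to comparing cardinalities.

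\emph{Packing side.} The greedy insertion scan starts from $P=\emptyset$ and processes the items once, in any order. It adds item $j$ whenever $w(P)+w_j\le c$. The result is a packing. It is maximal because the load only grows during the scan: if $j$ was rejected, then $w(P)+w_j>c$ already held at that moment, and it still holds for the final $P$. Every maximal packing is nonempty, since the nontrivial-instance assumption $\min_i w_i\le c$ means the empty set is not maximal. Hence $\optP\ge1$. On the other side, any packing has cost at most $n$. So $v(P)\le n\le n\cdot\optP$.

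\emph{Cover side.} The greedy deletion scan starts from $C=I$, which is a cover because $d\le w(I)$. It processes the items once and removes item $j$ whenever $w(C\setminus\{j\})\ge d$. The result is still a cover. It is minimal because the set only shrinks during the scan. If some $j$ was kept, then $w(C\setminus\{j\})<d$ held when $j$ was examined. Later removals only decrease $w(C)$, so this inequality still holds at the end. The final $C$ is nonempty because $d>0$, so $v(C)\ge1$. Since $\optC\le n$, we get $v(C)\ge\optC/n$.

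\emph{Running time.} Each scan costs $O(n)$ additions and comparisons once $w(P)$, respectively $w(C)$, is maintained as a running sum.

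The only step needing care is the monotonicity argument: a single pass must suffice for maximality and minimality even though later decisions change the load. For the cover side I would argue it exactly as above, because $w(C\setminus\{j\})$ can only decrease as other items are removed. One could also invoke Lemma~\ref{lem:minchar} directly. The rest of the proof is routine.
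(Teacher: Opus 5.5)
Your proof is correct and follows the paper's argument. One greedy pass yields a maximal packing, or a minimal cover, by monotonicity of the load, and nontriviality forces every feasible solution to have value between $1$ and $n$; you also spell out why a single deletion pass suffices for minimality, which the paper leaves implicit.
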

\begin{proof}
Inserting the items in any order whenever they fit gives a maximal packing,
since the residual capacity never increases; deleting items from $I$ whenever
the remainder is still a cover gives a minimal cover.  In a nontrivial
instance every maximal packing and every minimal cover is nonempty, so its
value lies between $1$ and $n$.
\qed
\end{proof}

Together, Theorem~\ref{thm:unitgap} and Proposition~\ref{prop:trivial}
essentially determine the polynomial scale of approximability in the unit-cost and unit-profit cases: factor $n$ is trivial, whereas any polynomial improvement in the exponent is impossible.
The bounds do not exclude factors such as $n/\log n$.
With arbitrary values, Proposition~\ref{prop:trivial} does not apply,
since a feasible solution can be arbitrarily far from the optimum, and
\vseven{Theorem~\ref{thm:noconst} excludes every factor $2^{p(n)}$ even
when all values are positive; allowing zero values excludes every
multiplicative factor.}

\begin{table}[ht]
\caption{The mirrored approximation thresholds.}
\label{tab:mirror}
\centering\small\setlength{\tabcolsep}{4pt}
\begin{tabular}{@{}lll@{}}
\toprule
 & Unit-cost \MCMKP & Unit-profit \MPMKC \\
\midrule
Objective & minimise over maximal packings & maximise over minimal covers \\
Gap of the construction & YES: $\le\tilde n$; NO: $\ge m$ & YES: $\ge m$; NO: $\le\tilde n$ \\
Upper bound & $n$ (Prop.~\ref{prop:trivial}) & $n$ (Prop.~\ref{prop:trivial}) \\
Lower bound & no $n^{1-\varepsilon}$ (Thm.~\ref{thm:unitgap}) & no $n^{1-\varepsilon}$ (Thm.~\ref{thm:unitgap}) \\
\bottomrule
\end{tabular}
\end{table}

\subsection{There is no value-indexed dynamic program and no FPTAS}

The hardness results also explain precisely where the analogy with
classical knapsack breaks.  The weight-indexed pseudo-polynomial dynamic
program survives, but a value-indexed analogue would already imply a
polynomial-time algorithm on instances with unit costs or profits.

\begin{corollary}
\label{cor:nofptas}
Unless $\PP=\NP$: (a) even on instances with integer values, neither
\MCMKP nor \MPMKC admits an algorithm whose running time is polynomial in
the binary input size and $v(I)$; (b) neither unit-cost \MCMKP
nor unit-profit \MPMKC admits an FPTAS.
\end{corollary}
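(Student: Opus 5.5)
Both parts of Corollary~\ref{cor:nofptas} follow from Theorem~\ref{thm:unitgap}, or even from the bare $\NP$-hardness of the unit-value cases. The point in each case is that on unit-value instances the objective is bounded by $n$.

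For (a), I would argue by contradiction. Suppose an exact algorithm runs in time polynomial in the binary input size and $v(I)$. Restrict it to the instances built in the proof of Theorem~\ref{thm:unitgap}, where all costs (respectively profits) equal $1$. There $v(I)=n$, so the running time is polynomial in the input size alone. The optimum separates YES from NO instances: $\optP\le\tilde n$ versus $\optP\ge m$ on the packing side, and the mirrored gap on the cover side. Computing the optimum exactly therefore decides \SubsetSum in polynomial time. In fact any fixed $k\ge1$ suffices here, since exactness only needs $m>\tilde n$, i.e.\ the Gai--Zhang gap. Integer values are automatic because the values are $1$.

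For (b), take an FPTAS and set $\varepsilon=1/(n+1)$. This choice makes the running time polynomial in the input size and $n$, hence polynomial. On a unit-value instance every feasible value is an integer in $[1,n]$, as in the proof of Proposition~\ref{prop:trivial}.

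On the packing side, a $(1+\varepsilon)$-approximate solution has value at most $(1+\varepsilon)\optP<\optP+1$, because $\varepsilon\optP\le n/(n+1)<1$. Integrality then forces the solution to be optimal. The cover side is symmetric: the returned value is at least $\optC/(1+\varepsilon)>\optC-1$. So the FPTAS solves unit-value instances exactly in polynomial time, and part (a), or directly the gap of Theorem~\ref{thm:unitgap}, gives $\PP=\NP$.

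No step is a serious obstacle. The only care needed is to note that $v(I)=n$ is polynomial in the input size, and that the choice $\varepsilon=1/(n+1)$ keeps the FPTAS polynomial.
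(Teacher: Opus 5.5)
Your argument is correct and is essentially the paper's. For (a), unit values give $v(I)=n$ on an $\NP$-hard family. You reuse the construction of Theorem~\ref{thm:unitgap}, while the paper cites the hardness result of \citet[Theorem~7]{GaiZhang2008}; these are the same construction, the latter with $\tilde n+1$ unit items. For (b), an accuracy of order $1/n$ makes the FPTAS exact by integrality: you take $1/(n+1)$, the paper takes $1/(2n)$.

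There is one small slip in a side remark. For $k=1$ you have $m=\tilde n$, so the condition $m>\tilde n$ that you quote fails. The claim still holds, because a YES instance in fact has $\optP\le\tilde n-1$: the subset summing to $b<\sum_i a_i$ must be proper.
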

\begin{proof}
(a) When all item values are one, $v(I)=n$, and unit-cost \MCMKP is $\NP$-hard
\citep[Theorem~7]{GaiZhang2008}; complementation keeps unit values and gives
$\optC=n-\optP$.  (b) The optimum is an integer in $\{1,\ldots,n\}$.  With
$\varepsilon=1/(2n)$, a scheme returns at most $(1+\varepsilon)\optP<\optP+1$
on the packing side and loses at most
$\optC\,\varepsilon/(1+\varepsilon)<1/2$ on the cover side, so it would be
exact.
\qed
\end{proof}

Part (a) explains why the classical route to an FPTAS for \KP, from the
weight-indexed to a profit-indexed dynamic program and then scaling, is
closed here: when all item values are one, a value-indexed algorithm would be
polynomial.  Part (b) is much weaker than Theorem~\ref{thm:unitgap} and is
recorded only for this reason.

\begin{remark}[Equal weights versus equal values]
\label{rem:unitweights}
With equal weights $w_i=w$ both problems are solved by sorting: a set is a maximal
packing if and only if it has $\lfloor c/w\rfloor$ items, and a minimal
cover if and only if it has $\lceil d/w\rceil$ items.  With weights bounded by a
polynomial in $n$ the dynamic programs are polynomial, since
$c,d\le w(I)\le n\max_{i\in I}w_i$.
However, with equal, or bounded, values
the problems remain $n^{1-\varepsilon}$-inapproximable.  Weights determine
feasibility, values only the objective; what helps on the value side is
correlation with the weights, not magnitude.
\end{remark}

The negative results therefore isolate two sources of difficulty.  Tiny
items create the exact-filling mechanism behind the gap reductions, while
arbitrary values disconnect the objective from the structure of the
feasible sets.  The next two sections examine what happens when these two
features are restricted.

% ===========================================================================
\section{Recovering approximability: restricting the weights}
\label{sec:pos}
% ===========================================================================

The hardness constructions exploit items of vanishing relative weight and
values decoupled from the weights.  This section and the next show that
excluding either feature leads to substantially better algorithmic
behaviour.  We start with the weights: excluding small items gives exact
algorithms and, on the cover side, an approximation scheme polynomial in
the relative size of the smallest item.
Figure~\ref{fig:recovery} shows the two independent restrictions developed
in this section and the next.

\begin{figure}[htbp]
\centering
\begin{tikzpicture}[
  font=\small,
  result/.style={draw,fill=black!10,align=center,minimum width=62mm,
    minimum height=10mm,inner xsep=8pt},
  arrow/.style={->,thick}
]
  \node[font=\bfseries] at (3.25,2.1) {Restrict the weights};

  \draw[arrow] (0.35,1.25)--(6.25,1.25);
  \fill[black!12] (0.35,1.12) rectangle (2.15,1.38);
  \draw[thick] (2.15,1.05)--(2.15,1.45);
  \node[font=\footnotesize] at (1.25,1.7) {tiny items excluded};
  \node[font=\footnotesize] at (2.15,0.85) {$\delta B$};
  \node at (4.4,1.7) {$w_i\ge\delta B$};
  \node[font=\footnotesize] at (3.3,0.45)
    {$B=c$ for packing; $B=d$ for cover};
  \node[result] (weight-result) at (3.25,-0.75)
    {bounded solution size $\Rightarrow$ exact polynomial time\\for fixed $\delta$};
  \draw[arrow] (3.25,0.15)--([yshift=1.5mm]weight-result.north);
  \node[font=\footnotesize,align=center] at (3.25,-1.75)
    {Cover: $w_i-1\ge\delta(d-1)$ gives a scheme\\polynomial in $1/\delta$};

  \begin{scope}[shift={(-7.8,-4.7)}]
  \node[font=\bfseries] at (11.05,2.1) {Restrict the values};
  \draw[arrow] (8.15,1.25)--(14.05,1.25);
  \draw[line width=2pt] (9.35,1.25)--(12.85,1.25);
  \draw[thick] (9.35,1.05)--(9.35,1.45);
  \draw[thick] (12.85,1.05)--(12.85,1.45);
  \node[font=\footnotesize] at (9.35,0.85) {$\lambda_-$};
  \node[font=\footnotesize] at (12.85,0.85) {$\lambda_+$};
  \node at (11.05,1.7) {$v_i/w_i\in[\lambda_-,\lambda_+]$};
  \node[font=\footnotesize] at (11.05,0.45)
    {$\kappa=\lambda_+/\lambda_-$};
  \node[result] (value-result) at (11.05,-0.75)
    {bounded $\kappa$ $\Rightarrow$ $(1+\varepsilon)\kappa$-approximation\\for both problems};
  \draw[arrow] (11.05,0.15)--([yshift=1.5mm]value-result.north);
  \node[font=\footnotesize] at (11.05,-1.75)
    {$\kappa=1$ $\Rightarrow$ FPTAS for both problems};
  \end{scope}
\end{tikzpicture}
\caption{Two ways to recover approximability.  The weight restriction
allows arbitrary costs or profits; the value restriction allows arbitrarily
small items.  The shifted weight condition gives a scheme on the cover side
with polynomial dependence on $1/\delta$.}
\label{fig:recovery}
\end{figure}

\subsection{Instances without small items}
\label{sec:nonsmall}

The hardness constructions of Section~\ref{sec:hard} use items of
weight~$1$: the special item $s$ of Theorem~\ref{thm:noconst}, where $c=b$,
and the $m$ artificial items of Theorem~\ref{thm:unitgap}, where $c=qb$.
Their weight relative to the capacity is $1/c$, and relative to the paired
threshold it is $1/d$; both vanish as the instances grow.
If every item is instead a fixed fraction of the capacity, or of the
threshold, feasible solutions have bounded cardinality and can be
enumerated.

\begin{theorem}
\label{thm:nonsmall}
Fix $\delta\in(0,1]$.
\begin{enumerate}[label=(\alph*),leftmargin=2em]
  \item If $w_i\ge\delta c$ for every item, every packing has at most
        $\lfloor1/\delta\rfloor$ items and \MCMKP is solved exactly in
        $O(n^{\lfloor1/\delta\rfloor+1})$ time.
  \item If $w_i\ge\delta d$ for every item, every minimal cover has at most
        $\lceil1/\delta\rceil$ items and \MPMKC is solved exactly in
        $O(n^{\lceil1/\delta\rceil+1})$ time.
\end{enumerate}
\end{theorem}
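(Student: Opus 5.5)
The plan is to bound the cardinality of every feasible solution and then enumerate all subsets of at most that size, checking feasibility and extremality of each candidate in linear time.

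For part (a), I take any packing $P$ with $w_i\ge\delta c$ for all $i\in P$. Then $|P|\,\delta c\le w(P)\le c$, so $|P|\le1/\delta$, and integrality gives $|P|\le\lfloor1/\delta\rfloor$. The algorithm enumerates every subset of $I$ with at most $k=\lfloor1/\delta\rfloor$ items; there are $\sum_{j\le k}\binom nj=O(n^k)$ of them. For each subset it computes $w(P)$ and checks maximality via Lemma~\ref{lem:maxchar}: with the items sorted once by weight, this means comparing $c-w(P)$ with the lightest excluded item, which costs $O(n)$. It records the cheapest maximal packing. Every maximal packing is a packing, so none is missed, and the total time is $O(n^{k+1})$. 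Nontrivial instances have at least one maximal packing, since greedy insertion produces one.

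For part (b), the cardinality bound needs minimality, because arbitrary covers can be large. I take a minimal cover $C$ with $|C|=t$ and let $j$ be its lightest item. By Lemma~\ref{lem:minchar}, $w(C)-w_j<d$. Every other item of $C$ weighs at least $\delta d$, so $(t-1)\delta d\le w(C\setminus\{j\})<d$, which gives $t-1<1/\delta$. Hence $t\le\lceil1/\delta\rceil$: if $1/\delta$ is an integer then $t-1\le1/\delta-1$, and otherwise $t-1\le\lfloor1/\delta\rfloor$. The algorithm enumerates all subsets of size at most $\lceil1/\delta\rceil$. For each it checks $w(C)\ge d$ and, by Lemma~\ref{lem:minchar}, $w(C)-\min_{j\in C}w_j<d$, and it keeps the most profitable minimal cover. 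Under the assumption $0<d\le w(I)$ a minimal cover exists, since greedy deletion from $I$ produces one.

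The only delicate step is the cardinality bound in part (b). The lemma yields a strict inequality only after removing the lightest item, and the rounding must be handled so that the result is $\lceil1/\delta\rceil$ rather than $\lfloor1/\delta\rfloor+1$. Using the strict inequality $(t-1)\delta d<d$ settles both cases. Everything else is routine counting.
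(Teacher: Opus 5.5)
Your proposal is correct and follows the paper's proof. Both use the bound $|P|\,\delta c\le w(P)\le c$ for packings and the bound $(|C|-1)\,\delta d\le w(C)-\min_{j\in C}w_j<d$ from Lemma~\ref{lem:minchar} for minimal covers, then enumerate all subsets of bounded size with an $O(n)$ test each; your case analysis showing that $t-1<1/\delta$ gives $t\le\lceil1/\delta\rceil$ spells out a rounding step the paper leaves implicit.
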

\begin{proof}
(a) A packing $P$ satisfies $|P|\,\delta c\le w(P)\le c$.  (b) A minimal
cover $C$ with lightest weight $\mu\ge\delta d$ satisfies
$(|C|-1)\mu\le w(C)-\mu<d$ by Lemma~\ref{lem:minchar}, so
$|C|-1<1/\delta$.  In
both cases enumerate the subsets of bounded size and test each in $O(n)$
time.
\qed
\end{proof}

For fixed $\delta$ the problems are thus in $\PP$; parameterised by
$1/\delta$ the algorithm is in $\XP$, its exponent depending on the
parameter.

\subsection{An approximation scheme for \MPMKC via the minimum-filling knapsack}
\label{sec:fpt}

The exact algorithm of Theorem~\ref{thm:nonsmall} is polynomial for fixed
$\delta$, but it is not an $\FPT$ algorithm for the parameter $1/\delta$,
which appears in the exponent of $n$.  On the cover side, giving up
exactness yields a running time polynomial in $1/\delta$: once the first
selected item is fixed, the residual problem is exactly the minimum-filling
knapsack of Theorem~\ref{thm:xu}, whose approximation scheme is polynomial
in $1/\delta$.

\begin{proposition}
\label{prop:fptas-delta}
Fix $\delta\in(0,1]$ and suppose $w_i-1\ge\delta(d-1)$ for every item.
Then for every $\varepsilon\in(0,1]$ a minimal cover of profit at least
$\optC/(1+\varepsilon)$ can be computed in time
$O\bigl(\tfrac{n^3}{\delta}(\log n+\tfrac1\varepsilon)\bigr)$, polynomial in
$n$, $1/\varepsilon$ and $1/\delta$.  The profits are arbitrary.
\end{proposition}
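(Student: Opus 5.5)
Sort the items by nondecreasing weight and decompose by the first selected item, as after Lemma~\ref{lem:minchar}.  For each $i\in I$ with $w(\{i,\ldots,n\})\ge d$, we solve the residual problem: maximise $v(C_{>i})$ over subsets $C_{>i}\subseteq\{i+1,\ldots,n\}$ subject to $d-w_i\le w(C_{>i})\le d-1$.  For every such $C_{>i}$, the set $C=\{i\}\cup C_{>i}$ is a cover.  Its lightest item is $i$ (ties are handled by the fixed order), so it is a minimal cover by Lemma~\ref{lem:minchar}.  Conversely, every minimal cover arises in this way with $i$ its first selected item.  Hence $\optC=\max_i\,(v_i+\OPT_i)$, where $\OPT_i$ is the residual optimum.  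We output the best $\{i\}\cup C_{>i}$ over all $i$.

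The residual problem is an instance of Theorem~\ref{thm:xu} with capacity $c_i=d-1$ and $\beta_i=(d-w_i)/(d-1)$, after handling trivial cases.  If $d=1$, every single item is a minimal cover and the residual set must be empty, so we simply take the item of maximum profit.  If $w_i\ge d$, the lower bound is nonpositive, so $\{i\}$ alone is a cover; minimality then forces $C_{>i}=\emptyset$, since every later item has positive weight.  In this case the residual optimum is $0$, and Theorem~\ref{thm:xu} with $\beta=0$ yields exactly that.

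Otherwise $d\ge2$ and $w_i<d$, and the key computation is
\[
1-\beta_i=\frac{(d-1)-(d-w_i)}{d-1}=\frac{w_i-1}{d-1}\ge\delta.
\]
This is precisely where the shifted hypothesis $w_i-1\ge\delta(d-1)$ is used, and it is the main point to get right: the window width $w_i-1$ must be compared with $d-1$, not with $d$.  Theorem~\ref{thm:xu}, applied with accuracy $\varepsilon'=\varepsilon/2$, then either correctly reports infeasibility or returns a feasible $C_{>i}$ with $v(C_{>i})\ge(1-\varepsilon')\OPT_i$.  This takes time $O(n^2\log n/\delta+n^2/(\varepsilon\delta))$, and summing over the $n$ choices of $i$ gives the claimed $O\bigl(\tfrac{n^3}{\delta}(\log n+\tfrac1\varepsilon)\bigr)$.

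For the approximation guarantee, take $i^*$ to be the first selected item of an optimal minimal cover.  Our output has value at least
\[
v_{i^*}+(1-\varepsilon')\OPT_{i^*}\ \ge\ (1-\varepsilon/2)\,\optC\ \ge\ \optC/(1+\varepsilon),
\]
where the last inequality uses $(1-\varepsilon/2)(1+\varepsilon)\ge1$ for $\varepsilon\in(0,1]$.  Profits may be arbitrary nonnegative values, since Theorem~\ref{thm:xu} places no restriction on them.
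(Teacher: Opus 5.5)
Your proof is correct and follows the paper's argument. You decompose by the first selected item, treat $d=1$ and $w_i\ge d$ separately, and reduce each remaining residual to Theorem~\ref{thm:xu} with capacity $d-1$ and $1/(1-\beta_i)=(d-1)/(w_i-1)\le1/\delta$.

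The differences are cosmetic. You use accuracy $\varepsilon/2$ together with $(1-\varepsilon/2)(1+\varepsilon)\ge1$, where the paper uses $\eta=\varepsilon/(1+\varepsilon)$. You also do not discard items heavier than $d-1$ before calling Theorem~\ref{thm:xu}, as the paper does. One justification is misattributed: for $w_i\ge d$, $C_{>i}=\emptyset$ holds because $\{i\}$ already covers $d$, so any further selected item is removable, not because the later items have positive weight.
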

\begin{proof}
If $d=1$, every minimal cover is a singleton, so return a most profitable
item.  Hence assume $d\ge2$.
Sort the items in nondecreasing order as before
and try every item $i$ as the
first selected item of an optimal cover.
If $w_i\ge d$, the only minimal
cover with first selected item $i$ is $\{i\}$: every other selected item
has weight at least $w_i$, so any larger set $C$ would have
$w(C)-w_i\ge w_i\ge d$.  Otherwise the minimal covers with first selected
item $i$ are the sets
$\{i\}\cup C_{>i}$ with $C_{>i}\subseteq\{i+1,\ldots,n\}$ and
$d-w_i\le w(C_{>i})\le d-1$.  After discarding the items heavier than $d-1$
this is an instance of Theorem~\ref{thm:xu} with capacity $d-1$ and
$\beta_i=(d-w_i)/(d-1)$, where $1/(1-\beta_i)=(d-1)/(w_i-1)\le1/\delta$.
Run the algorithm of Theorem~\ref{thm:xu} with accuracy
$\eta=\varepsilon/(1+\varepsilon)$.  If it reports infeasibility, discard
$i$; this does not happen for the first selected item of an optimal cover.
Otherwise it returns $C_{>i}$ with $d-w_i\le w(C_{>i})\le d-1$, so
$\{i\}\cup C_{>i}$ is a minimal cover by the decomposition following
Lemma~\ref{lem:minchar}, and
$v_i+v(C_{>i})\ge v_i+(1-\eta)\OPT_i
\ge(1-\eta)(v_i+\OPT_i)=(v_i+\OPT_i)/(1+\varepsilon)$, where
$\OPT_i$ is the optimum of the residual problem.  Return the best set over
all $i$.  The second inequality uses $v_i\ge0$.  Since
$1/\eta\le2/\varepsilon$, the running time is $n$ times the
bound of Theorem~\ref{thm:xu}.
\qed
\end{proof}

\begin{corollary}
\label{cor:fptas-fixed-delta}
For every fixed $\delta\in(0,1]$, \MPMKC restricted to instances with
$w_i-1\ge\delta(d-1)$ for all $i\in I$ admits an FPTAS.  More generally, if
$w_{\min}=\min_{i\in I}w_i\ge2$, the scheme is an FPTAS on every family of
instances on which $(d-1)/(w_{\min}-1)$ is bounded by a polynomial in the
input size.
\end{corollary}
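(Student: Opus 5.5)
Both parts of the corollary follow by plugging the relevant instance family into the running-time bound of Proposition~\ref{prop:fptas-delta}. For the first part, with $\delta\in(0,1]$ fixed, the bound $O\bigl(\tfrac{n^3}{\delta}(\log n+\tfrac1\varepsilon)\bigr)$ becomes $O\bigl(n^3(\log n+1/\varepsilon)\bigr)$, since $1/\delta$ is a constant. This is polynomial in $n$ and $1/\varepsilon$ jointly. The proposition also guarantees a minimal cover of profit at least $\optC/(1+\varepsilon)$ for every $\varepsilon\in(0,1]$. For $\varepsilon>1$ we simply run the scheme with $\varepsilon=1$, which gives a stronger guarantee. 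So the family of algorithms is an FPTAS. The cost of the arithmetic on the input numbers is covered by measuring time in the binary encoding, as in the proof of Theorem~\ref{thm:noconst}.

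For the general statement, the approach is to define the best admissible $\delta$ directly from the instance. Suppose $w_{\min}\ge2$, and assume $d\ge2$, since $d=1$ is solved trivially as in the proposition. Set
\[
\delta^\ast=\frac{w_{\min}-1}{d-1}.
\]
If $\delta^\ast\ge1$, then every item has weight at least $d$. Every minimal cover is then a singleton, and we return a most profitable item exactly. Otherwise $\delta^\ast\in(0,1)$, and every item satisfies
\[
w_i-1\ge w_{\min}-1=\delta^\ast(d-1),
\]
so the hypothesis of Proposition~\ref{prop:fptas-delta} holds with $\delta=\delta^\ast$. Its running time is then
\[
O\bigl(n^3\,\tfrac{d-1}{w_{\min}-1}(\log n+1/\varepsilon)\bigr).
\]
This is polynomial in the input size and $1/\varepsilon$ whenever $(d-1)/(w_{\min}-1)$ is bounded by a polynomial in the input size.

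The only point that needs care is how $\delta$ is supplied to the algorithm. I would check that the proof of Proposition~\ref{prop:fptas-delta} never uses $\delta$ as an input. It only uses it to bound $1/(1-\beta_i)=(d-1)/(w_i-1)$ in the call to Theorem~\ref{thm:xu}. The algorithm therefore runs unchanged, and the bound in terms of $\delta^\ast$ follows automatically. In any case, $\delta^\ast$ is computable from the instance in linear time. I expect no real obstacle here, since the argument is a direct substitution into the proposition.
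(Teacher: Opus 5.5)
Your proposal is correct and follows the paper's argument. The first part is Proposition~\ref{prop:fptas-delta} with $1/\delta$ constant; for the second, the paper observes that each candidate $i$ with $w_i<d$ costs time polynomial in $n$, $1/\varepsilon$ and $(d-1)/(w_i-1)\le(d-1)/(w_{\min}-1)$, which is exactly what your choice $\delta^\ast=(w_{\min}-1)/(d-1)$ encodes, and your handling of $d=1$, $\delta^\ast\ge1$, $\varepsilon>1$ and of $\delta$ not being an input is sound.
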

\begin{proof}
The first statement is Proposition~\ref{prop:fptas-delta} with $1/\delta$
constant.  For the second, the proof of Proposition~\ref{prop:fptas-delta}
spends, on a candidate $i$ with $w_i<d$, time polynomial in $n$,
$1/\varepsilon$ and $(d-1)/(w_i-1)\le(d-1)/(w_{\min}-1)$.
\qed
\end{proof}
The bound involves the lightest item because every item is a
candidate first selected item: an optimal cover whose first selected item
has weight $1$ leads to an exact subset-sum problem by
Corollary~\ref{cor:exactfill}(b).  As $d$ is encoded in binary, a polynomial
bound on $(d-1)/(w_{\min}-1)$ is a genuine restriction; it is also the form
in which \citet{Xu2013} states his own result.

We add three comments.  First, for fixed $\delta$
Theorem~\ref{thm:nonsmall}(b) already solves the problem exactly, but with
$1/\delta$ in the exponent of $n$; here it is a polynomial factor, so the
scheme survives a $\delta$ that shrinks with the instance.  Second, the
excluded case $w_i=1$ is $\beta_i=1$, the boundary at which feasibility
becomes \SubsetSum and Theorem~\ref{thm:xu} no longer applies, and it is
also the exact-filling boundary
of Corollary~\ref{cor:exactfill}(b); the two hard boundaries coincide.
Third, there is no packing-side analogue by this route: the residual
packing problem minimises, whereas Theorem~\ref{thm:xu} maximises, and the
minimisation knapsack FPTAS
handles a one-sided load constraint only.

% ===========================================================================
\section{Recovering approximability: restricting the values}
\label{sec:values}
% ===========================================================================

We now turn to the values.  We first treat the proportional endpoint
$v_i=w_i$, where both problems admit approximation schemes, and then show
how much of this survives when the values deviate from the weights by a
bounded factor.

\subsection{The proportional case: approximation schemes for the Greedy Boss Problem}
\label{sec:gbp}

In the proportional case the packing problem already admits an FPTAS
(Theorem~\ref{thm:gmp-lbp}), whereas for the cover problem only the tight
factor $2$ of Max Greedy was known (Theorem~\ref{thm:maxgreedy}).
We now show that the cover side also
admits approximation schemes: a simple combinatorial PTAS, and then an
FPTAS.  Throughout this subsection $v_i=w_i$, so $\optC$ is the
maximum weight of a minimal cover.
For the PTAS developed first, the items are now sorted by nonincreasing weight $w_1\ge\cdots\ge w_n$.

Max Greedy may err at its first choice: for an integer $N\ge3$, on $d=N$ and
$w=(N,N-1,N-1)$ it takes the heaviest item and stops, while $\{2,3\}$ is a
minimal cover of weight $2N-2$.
Algorithm $\mathcal{A}_k$ guesses the $k$ heaviest items of an optimal cover and lets Max Greedy complete the guess with items no heavier than the lightest item among the guessed ones.

\paragraph{Algorithm $\mathcal{A}_k$.}
Let $k\ge0$ be an integer.  For every $T\subseteq I$ with $|T|\le k$:
\begin{enumerate}[label=(\roman*),leftmargin=2.5em]
  \item if $w(T)\ge d$: record $T$ if it is a minimal cover, otherwise
        discard the guess;
  \item if $w(T)<d$: let $\mu(T)=\min_{j\in T}w_j$ (with
        $\mu(\emptyset)=+\infty$) and
        $R(T)=\{i\in I\setminus T: w_i\le\mu(T)\}$.  If $w(T)+w(R(T))<d$,
        discard the guess; otherwise run Max Greedy on $R(T)$ with
        threshold $d-w(T)$, call its output $G(T)$, and record
        $T\cup G(T)$.
\end{enumerate}
Return a recorded set of maximum weight.  For $k=0$ the algorithm is Max
Greedy.  After sorting, it runs in $O(n^{k+1})$ time for $k\ge1$.

\begin{example}
On $d=N$, $w=(N,N-1,N-1)$, the guess $T=\{2\}$ has $R(T)=\{3\}$, and
Max Greedy with threshold $1$ returns $\{3\}$: $\mathcal{A}_1$ records the
optimum $\{2,3\}$.
\end{example}

We first show that the algorithm only records feasible solutions, and then
bound the weight of the best one.

\begin{lemma}
\label{lem:akfeasible}
Every set recorded by $\mathcal{A}_k$ is a minimal cover for $d$.
\end{lemma}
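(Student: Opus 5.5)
The plan is to split along the two branches of $\mathcal{A}_k$. In branch (i) the set $T$ is recorded only after it has been checked to be a minimal cover, so nothing needs proving there. The work is in branch (ii), where the recorded set is $T\cup G(T)$ with $w(T)<d$ and $w(T)+w(R(T))\ge d$. For this set I will verify two things: that it is a cover, and that removing its lightest element leaves weight below $d$. By Lemma~\ref{lem:minchar}, this second check is exactly minimality.

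The cover property comes directly from Max Greedy. The residual threshold is $d-w(T)>0$, and $w(R(T))\ge d-w(T)$, so the whole set $R(T)$ covers it. Max Greedy therefore returns a nonempty prefix $G(T)$ of $R(T)$, taken in nonincreasing order of weight, with $w(G(T))\ge d-w(T)$. Since $T$ and $R(T)$ are disjoint, $w(T\cup G(T))=w(T)+w(G(T))\ge d$.

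For minimality, let $g$ be the last item that Max Greedy adds. Because $G(T)$ is the \emph{shortest} covering prefix, the prefix without $g$ does not cover, so $w(G(T))-w_g<d-w(T)$. This gives $w(T\cup G(T))-w_g<d$. Next I check that $g$ is a lightest element of $T\cup G(T)$. Since $g$ is last in nonincreasing order, $w_g$ is at most the weight of every other element of $G(T)$. Every element of $R(T)$ has weight at most $\mu(T)$, so $w_g\le\mu(T)$, which is the weight of the lightest element of $T$; when $T=\emptyset$ there is nothing to compare. Hence $w_g=\min_{j\in T\cup G(T)}w_j$, and Lemma~\ref{lem:minchar} gives minimality.

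The only delicate point is this last step, where the construction of $R(T)$ with the bound $w_i\le\mu(T)$ is needed to conclude that the final greedy item is globally lightest. Ties cause no trouble, because Lemma~\ref{lem:minchar} uses only the minimum weight and not the identity of the item attaining it.
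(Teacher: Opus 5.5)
Your proof is correct and follows essentially the same route as the paper. Branch (i) is immediate. In branch (ii), coverage comes from the guess not being discarded, and minimality comes from the greedy stopping rule at the last added item $g$, together with $w_g\le\mu(T)$ and the nonincreasing scan order. The only cosmetic differences are that you explicitly note that $G(T)$ is nonempty and conclude via Lemma~\ref{lem:minchar}, whereas the paper checks $w(C\setminus\{j\})<d$ for every $j\in C$ directly.
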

\begin{proof}
Sets recorded in (i) are minimal covers by the test.  Let $C=T\cup G(T)$ be
recorded in (ii).  Since the guess is not discarded,
$w(G(T))\ge d-w(T)$ and $C$ is a cover.  Let $g$ be the last item added by
Max Greedy; its stopping rule gives $w(C)-w_g<d$.  Every $j\in C$ has
$w_j\ge w_g$: for $j\in T$ because $g\in R(T)$, and for $j\in G(T)$ because
Max Greedy scans by nonincreasing weight.  Hence $w(C\setminus\{j\})<d$ for
every $j\in C$.
\qed
\end{proof}

\begin{theorem}
\label{thm:gbpk}
For every integer $k\ge0$, $\mathcal{A}_k$ returns a minimal cover $C$ with
$w(C)\ge\frac{k+1}{k+2}\optC$; that is, $\mathcal{A}_k$ is a polynomial-time
$\frac{k+2}{k+1}$-approximation for \GBP.
\end{theorem}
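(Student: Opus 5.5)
Fix an optimal minimal cover $C^*$ and list its items by nonincreasing weight, $w_{c_1}\ge\cdots\ge w_{c_s}$, with ties broken by the global order. By Lemma~\ref{lem:akfeasible} every recorded set is a minimal cover, so it suffices to show that one recorded set has weight at least $\frac{k+1}{k+2}\optC$. I split into two cases according to the size $s=|C^*|$.

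\emph{Case $s\le k$.} The guess $T=C^*$ is enumerated. Since $w(C^*)\ge d$, step (i) applies and records $C^*$ itself, which is optimal.

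\emph{Case $s>k$.} Take the guess $T=\{c_1,\ldots,c_k\}$, the $k$ heaviest items of $C^*$. Minimality of $C^*$ gives $w(T)\le w(C^*)-w_{c_s}<d$, so step (ii) applies. The remaining items $c_{k+1},\ldots,c_s$ all have weight at most $\mu(T)$ and lie outside $T$, so they belong to $R(T)$. Hence $w(T)+w(R(T))\ge w(C^*)\ge d$, and the guess is not discarded.

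Let $G=G(T)$ be the output of Max Greedy on $R(T)$ with threshold $d'=d-w(T)$, and let $g$ be the last item it adds. The set recorded is $T\cup G$, and its weight satisfies $w(T\cup G)\ge d$. On the other side, minimality of $C^*$ gives $\optC=w(C^*)<d+w_{c_s}\le d+\mu$. The key step is to bound $w_{c_s}$ relative to $\optC$. Every one of the $k+1$ items $c_1,\ldots,c_{k+1}$ weighs at least $w_{c_s}$, so $w_{c_s}\le\optC/(k+1)$. This already gives
\[
\optC< w(T\cup G)+\frac{\optC}{k+1},
\]
that is, $w(T\cup G)>\frac{k}{k+1}\optC$. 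That bound is too weak by one.

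To gain the missing term I would avoid the crude $d$ comparison and use integrality through minimality. Since $C^*$ is minimal, $\optC\le d+w_{c_s}-1$, and since $T\cup G$ is a cover, $w(T\cup G)\ge d$. I then split on the size of $G$. If $G$ has at least one item, then $T\cup G$ has at least $k+1$ items, each of weight at least $w_g$. Let $t=|T\cup G|$; the same reasoning as in Theorem~\ref{thm:maxgreedy} shows $w(T\cup G)\ge\frac{t}{t+1}$ times the optimum among minimal covers consistent with the guess. Concretely, I would apply Theorem~\ref{thm:maxgreedy} to the residual instance $(R(T),d')$, in which $C^*\setminus T$ is a minimal cover (removing its lightest item leaves weight below $d'$). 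This yields $w(G)\ge\frac{t'}{t'+1}w(C^*\setminus T)$, where $t'=|G|\ge1$, and therefore $w(G)\ge\frac12 w(C^*\setminus T)$ together with the additive form $w(G)\ge w(C^*\setminus T)-w_{g}$. Combining the additive form $w(T\cup G)\ge\optC-w_g$ with $w_g\le\mu$ and the fact that $T\cup G$ contains $k+1$ items of weight at least $w_g$, so that $w_g\le w(T\cup G)/(k+1)$, gives
\[
\optC\le w(T\cup G)\Bigl(1+\frac{1}{k+1}\Bigr)=\frac{k+2}{k+1}\,w(T\cup G),
\]
which is the claim.

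The main obstacle is justifying the additive inequality $w(T\cup G)\ge\optC-w_g$, that is, $\optC<d+w_g$ or an equivalent bound. The step I expect to be delicate is comparing $w_g$ with $w_{c_s}$. I would argue by the greedy structure: if $w_g<w_{c_s}$, then Max Greedy had already passed all items of weight at least $w_{c_s}$ in $R(T)$, which include $C^*\setminus T$, without reaching $d'$. That contradicts $w(C^*\setminus T)\ge d'$. So $w_g\ge w_{c_s}$, and hence $\optC<d+w_{c_s}\le d+w_g\le w(T\cup G)+w_g$. I would check this tie-breaking argument carefully, since ties in weight between $g$ and the items of $C^*$ are where it could fail.
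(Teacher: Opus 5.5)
Your final argument is correct and is essentially the paper's: both combine $\optC<d+w_{c_s}$ (minimality of $C^*$) with $d\le w(T\cup G)$ and $(k+1)w_{c_s}\le w(T\cup G)$. The only difference is how you get the last bound. You go through the last greedy item, showing $w_{c_s}\le w_g$ by your prefix argument, which is sound because ties are irrelevant when $w_g<w_{c_s}$ is strict. The paper instead notes directly that the first item Max Greedy picks is a heaviest item of $R(T)\ni c_s$, which makes your detour through Theorem~\ref{thm:maxgreedy} and the additive form unnecessary.
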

\begin{proof}
Let $C^*$ be an optimal minimal cover and $z^*=w(C^*)$.  If $|C^*|\le k$,
the guess $T=C^*$ is recorded in (i).  Otherwise let $T$ consist of the $k$
heaviest items of $C^*$ and let $i^*\in C^*\setminus T$ be a lightest item
of $C^*$.  As a proper subset of a minimal cover, $T$ has $w(T)<d$; and
$C^*\setminus T\subseteq R(T)$, so $w(T)+w(R(T))\ge z^*\ge d$.  Hence
$C=T\cup G(T)$ is recorded.  The first item chosen by Max Greedy is a
heaviest item of $R(T)\ni i^*$, and the items of $T$ weigh at least
$w_{i^*}$, so $w(C)\ge(k+1)w_{i^*}$.  If $w_{i^*}\ge z^*/(k+2)$ we are done.
Otherwise Lemma~\ref{lem:minchar} applied to $C^*$ gives
$z^*<d+w_{i^*}<d+z^*/(k+2)$, that is, $\frac{k+1}{k+2}z^*<d\le w(C)$.
\qed
\end{proof}

\begin{corollary}
\label{cor:ptas}
\GBP admits a PTAS: for every fixed $\varepsilon>0$, $\mathcal{A}_k$ with
$k=\lceil1/\varepsilon\rceil-1$ is a $(1+\varepsilon)$-approximation running
in $n^{O(1/\varepsilon)}$ time.
\end{corollary}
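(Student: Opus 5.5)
The corollary follows directly from Theorem~\ref{thm:gbpk}; the plan is to choose $k$ so that the ratio $\frac{k+2}{k+1}$ is at most $1+\varepsilon$, and then to bound the running time.

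First, the ratio. Since $\frac{k+2}{k+1}=1+\frac{1}{k+1}$, the guarantee $1+\varepsilon$ holds exactly when $k+1\ge1/\varepsilon$. The choice $k=\lceil1/\varepsilon\rceil-1$ gives $k+1=\lceil1/\varepsilon\rceil\ge1/\varepsilon$, so $\frac{1}{k+1}\le\varepsilon$. This $k$ is a nonnegative integer because $\varepsilon>0$. Theorem~\ref{thm:gbpk} then yields a minimal cover $C$ with $w(C)\ge\optC/(1+\varepsilon)$. Feasibility comes from Lemma~\ref{lem:akfeasible}. In this proportional setting $v(C)=w(C)$, so this is exactly a $(1+\varepsilon)$-approximation in the paper's maximisation convention.

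Second, the running time. The stated bound for $\mathcal{A}_k$ is $O(n^{k+1})$ for $k\ge1$, plus $O(n\log n)$ for sorting. Substituting $k+1=\lceil1/\varepsilon\rceil\le1/\varepsilon+1$ gives $n^{O(1/\varepsilon)}$. For $\varepsilon\ge1$ we have $k=0$, and the algorithm is Max Greedy, which runs in $O(n\log n)$ time. That is also within the bound, since $\varepsilon\le1$ may be assumed without loss or the bound simply absorbs it.

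The only step needing any care is the $O(n^{k+1})$ count, which the text asserts. If I verified it, I would note that there are $O(n^k)$ guesses $T$. For each guess, computing $R(T)$, testing $w(T)+w(R(T))\ge d$, and running Max Greedy on the already-sorted list take $O(n)$ time. The minimality test in case (i) also takes $O(n)$ via Lemma~\ref{lem:minchar}. For fixed $\varepsilon$ the exponent is a constant, so the running time is polynomial and the family $(\mathcal{A}_k)$ is a PTAS.
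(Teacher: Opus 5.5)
Your proposal is correct and follows the paper's proof. Choosing $k=\lceil1/\varepsilon\rceil-1$ gives $k+1\ge1/\varepsilon$, so Theorem~\ref{thm:gbpk} yields the ratio $\frac{k+2}{k+1}\le1+\varepsilon$, and the running time $O(n^{k+1})$ is $n^{O(1/\varepsilon)}$. Your check of the $O(n^{k+1})$ count ($O(n^k)$ guesses with $O(n)$ work each on the presorted list) and your remark on the $k=0$ case are sound details that the paper leaves implicit.
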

\begin{proof}
$k+1\ge1/\varepsilon$ gives $\frac{k+2}{k+1}\le1+\varepsilon$, and the
running time is $O(n^{k+1})$.
\qed
\end{proof}

\begin{remark}
\label{rem:gaizhang-ak}
The guessing technique is the one of \citet[Theorem~3]{GaiZhang2008} for the
makespan objective of the Lazy Bureaucrat Problem with a common deadline,
and more closely of \citet[Algorithm~3 and Theorem~2]{GMP2014}, who guess
the $k$ heaviest elements of an optimal lazy matroid solution and restrict
the completion to lighter elements.  $\mathcal{A}_k$ is its cover-side
counterpart on the free matroid; the restriction to items no heavier than
the lightest guess is what keeps the output minimal
(Lemma~\ref{lem:akfeasible}).
\end{remark}

The scheme $\mathcal{A}_k$ is combinatorial and strongly polynomial for every fixed accuracy, but its running time is $n^{O(1/\varepsilon)}$.
The pseudo-polynomial property of \GBP can be exploited further:
combining the decomposition by the first selected item with an FPTAS for the optimisation version of \SubsetSum (maximise $w(S)$ subject to $w(S)\le b$) yields a fully polynomial-time approximation scheme.
Such FPTASs are classical \citep{IbarraKim1975,KMPS2003,KPP2004}:
for every $\eta\in(0,1)$ they return a set $S$ feasible for \SubsetSum with $w(S)\ge(1-\eta)\max\{w(S'):w(S')\le b\}$, in time polynomial in the number of items and $1/\eta$.

\paragraph{Algorithm $\mathcal{F}_\varepsilon$.}
Let $\varepsilon\in(0,1]$ and $\eta=\varepsilon/(1+\varepsilon)$.  Compute a
minimal cover $C^0$ by deleting items from $I$ while the remainder is still
a cover, and record it.  Sort the items as in Section~\ref{sec:struct}, by
nondecreasing weight with the fixed tie-breaking order.  For every item $i$:
if $w_i\ge d$, record $\{i\}$; otherwise run a \SubsetSum FPTAS with
accuracy $\eta$ on the items $i+1,\ldots,n$ with bound $d-1$, let $S_i$ be
the returned set, and record $\{i\}\cup S_i$ if $w(S_i)\ge d-w_i$.  Return a
recorded set of maximum weight.

\begin{theorem}
\label{thm:gbpfptas}
For every $\varepsilon\in(0,1]$, $\mathcal{F}_\varepsilon$ returns a minimal
cover $C$ with $w(C)\ge\optC/(1+\varepsilon)$, in time $n$ times the running
time of the \SubsetSum FPTAS with accuracy $\eta$.  Hence \GBP admits an
FPTAS.
\end{theorem}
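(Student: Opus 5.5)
The plan has three parts: show that every recorded set is a minimal cover, bound the weight of the best recorded set against $\optC$, and read off the running time.

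\emph{Feasibility.} The set $C^0$ is a minimal cover by construction, as in Proposition~\ref{prop:trivial}; it is nonempty because the instance is nontrivial. A recorded singleton $\{i\}$ with $w_i\ge d$ is a cover, and it is minimal since removing $i$ leaves weight $0<d$. Now consider a recorded set $\{i\}\cup S_i$. Here $S_i\subseteq\{i+1,\ldots,n\}$ and $d-w_i\le w(S_i)\le d-1$. Every item of $S_i$ weighs at least $w_i$ in the nondecreasing order, so $i$ is a lightest item of the set, and the total weight lies in $[d,d+w_i-1]$. Lemma~\ref{lem:minchar} then gives minimality. This is exactly the decomposition following Lemma~\ref{lem:minchar}.

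\emph{Ratio.} Let $C^*$ be an optimal minimal cover with first selected item $i$ in the fixed order, and set $z^*=w(C^*)$. If $w_i\ge d$, then $C^*=\{i\}$, as argued in Proposition~\ref{prop:fptas-delta}, and it is recorded. Otherwise $C^*_{>i}$ is feasible for the subset-sum instance on items $i+1,\ldots,n$ with bound $d-1$, so its optimum $s^*_i$ satisfies $s^*_i\ge w(C^*_{>i})=z^*-w_i\ge d-w_i$. The FPTAS returns $S_i$ with $w(S_i)\ge(1-\eta)s^*_i\ge(1-\eta)(z^*-w_i)$.

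The hard part is that this need not reach $d-w_i$. The additive loss $\eta s^*_i$ may push $w(S_i)$ below the lower end of the window, and then the candidate is not recorded. I would first try a case split on $w_i$.

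\emph{Case 1: the guarantee already lands in the window.} This holds when $(1-\eta)s^*_i\ge d-w_i$. Then the set is recorded and
\[
w_i+w(S_i)\ge w_i+(1-\eta)(z^*-w_i)\ge(1-\eta)z^*=z^*/(1+\varepsilon).
\]

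\emph{Case 2: the guarantee may fall short.} This can only happen if $\eta(d-1)>w_i-1$, roughly when $w_i$ is small compared with $\eta d$. Then $z^*\le d+w_i-1<d+\eta(d-1)+1$, so $z^*$ exceeds $d$ only slightly. The fallback $C^0$ has weight at least $d$, so I want $d\ge z^*/(1+\varepsilon)$. Since $1/(1-\eta)=1+\varepsilon$, this reduces to $z^*\le(1+\varepsilon)d$, that is, $w_i-1\le\varepsilon d$. That follows from $w_i-1<\eta(d-1)\le\varepsilon d$.

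So Case 2 is exactly why $C^0$ is recorded, and Case 1 is the generic case. I should double-check the boundary: Case 2 is defined by the failure of the Case 1 inequality. If that inequality fails, then $(1-\eta)s^*_i<d-w_i\le s^*_i$, so $\eta s^*_i>s^*_i-(d-w_i)\ge0$. Also $s^*_i\le d-1$. Hence $w_i-1\le z^*-d$, because $C^*_{>i}$ has $w(C^*_{>i})=z^*-w_i\ge d-w_i$. Combined with $s^*_i\le d-1$, this lets me derive $w(C^*)-d<\eta(d-1)$ directly, which is the bound actually needed. With $w(C^0)\ge d$ this gives the ratio $1+\varepsilon$.

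\emph{Running time.} Computing $C^0$ and sorting cost $O(n\log n)$. There are $n$ calls to the \SubsetSum FPTAS, each with accuracy $\eta$, and $1/\eta\le2/\varepsilon$. The stated bound follows, and so does the FPTAS for \GBP.
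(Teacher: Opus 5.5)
Your structure is the paper's argument: decompose by the first selected item $i$ of an optimal cover, and use $C^0$ as the fallback when $z^*$ is close to $d$. You run the case split in the contrapositive direction: you split on whether $(1-\eta)s^*_i\ge d-w_i$, while the paper splits on whether $z^*\le(1+\varepsilon)d$. You also do not need the paper's remark that $C^*_{>i}$ is optimal for the residual instance. Feasibility, Case~1 and the running time are fine. Case~2, which carries the whole difficulty, is not correctly argued.

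Your first claim is that failure of $(1-\eta)s^*_i\ge d-w_i$ forces $w_i-1<\eta(d-1)$. This is false. Take $d=100$ and two items of weight $50$. Then $C^*$ consists of both items and $i$ is the first of them. We get $s^*_i=50=d-w_i$, so $(1-\eta)s^*_i<d-w_i$ for every $\eta>0$. Yet $\eta(d-1)=99\eta\le49=w_i-1$ for, e.g., $\varepsilon=0.1$. Case~2 does not mean that $w_i$ is small; it means that the residual optimum sits near the lower end of the window.

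Your ``double-check'' then asserts $w_i-1\le z^*-d$. This is the reverse of what minimality gives: Lemma~\ref{lem:minchar} yields $z^*-d\le w_i-1$. It also does not follow from $z^*-w_i\ge d-w_i$. The bound you finally state, $z^*-d<\eta(d-1)$, is true. However, it comes from the upper bound $w(C^*_{>i})\le s^*_i$, which you had noted earlier, and not from anything involving $w_i-1$:
\[
z^*-d=w(C^*_{>i})-(d-w_i)\le s^*_i-(d-w_i)<\eta s^*_i\le\eta(d-1)<\varepsilon d .
\]
Hence $w(C^0)\ge d>z^*/(1+\varepsilon)$. This chain is the paper's key inequality read backwards. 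There, $z^*-d>\varepsilon d$ together with a loss $\eta s^*<\varepsilon d$ forces $w(S_i)>d-w_i$. Replace your Case~2 by this chain and the proof is complete.
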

\begin{proof}
\emph{Feasibility.}  $C^0$ is a minimal cover, and so is $\{i\}$ when
$w_i\ge d$.  A set $\{i\}\cup S_i$ is recorded only if
$d-w_i\le w(S_i)\le d-1$ with $S_i\subseteq\{i+1,\ldots,n\}$, so it is a
minimal cover by the decomposition following Lemma~\ref{lem:minchar}.

\emph{Guarantee.}  Let $C^*$ be an optimal minimal cover and $z^*=w(C^*)$.
If $z^*\le(1+\varepsilon)d$, then $w(C^0)\ge d\ge z^*/(1+\varepsilon)$.
Otherwise let $i$ be the first selected item of $C^*$.  If $w_i\ge d$, then
$C^*=\{i\}$ is recorded.  If $w_i<d$, let $S^*=C^*\setminus\{i\}$ and
$s^*=w(S^*)$, so that $d-w_i\le s^*\le d-1$; in particular $S^*$ is
feasible for the residual \SubsetSum instance.  It is optimal there:
a heavier feasible residual set would, by the decomposition following
Lemma~\ref{lem:minchar}, give a minimal cover heavier than $C^*$.
Hence $(1-\eta)s^*\le w(S_i)\le s^*$.  Since
$s^*-(d-w_i)=z^*-d>\varepsilon d$ and $\eta s^*<\eta d<\varepsilon d$, we get
$w(S_i)\ge s^*-\eta s^*>d-w_i$, so $\{i\}\cup S_i$ is recorded, and
\[
 w_i+w(S_i)\ \ge\ w_i+(1-\eta)s^*\ \ge\ (1-\eta)z^*\ =\ \frac{z^*}{1+\varepsilon}.
\]
The running time is dominated by the $n$ calls to the \SubsetSum FPTAS;
since $1/\eta\le2/\varepsilon$, it is polynomial in the input size and $1/\varepsilon$.
\qed
\end{proof}

Figure~\ref{fig:gbp-margin} illustrates why the approximate residual remains above the lower threshold in the high-optimum case of the proof.
\begin{figure}[htbp]
\centering
\begin{tikzpicture}[font=\small,x=0.78cm]
  \node[font=\bfseries] at (7.25,1.55)
    {high-optimum case: $z^*>(1+\varepsilon)d$};
  \draw[thick,->] (0.65,0.15)--(14.55,0.15);
  \foreach \x in {1.3,5.6,9.4,13.4}
    \draw[thick] (\x,-0.05)--(\x,0.35);
  \node at (1.3,-0.28) {$d-w_i$};
  \node at (5.6,-0.28) {$w(S_i)$};
  \node at (9.4,-0.28) {$s^*$};
  \node at (13.4,-0.28) {$d-1$};

  \draw[<->,thick] (1.3,0.72)--(9.4,0.72);
  \node[fill=white,inner sep=1pt] at (5.35,1.02)
    {$s^*-(d-w_i)=z^*-d>\varepsilon d$};
  \draw[<->,thick] (5.6,-0.80)--(9.4,-0.80);
  \node[fill=white,inner sep=1pt] at (7.5,-1.13)
    {$s^*-w(S_i)\le\eta s^*<\varepsilon d$};
  \node at (7.25,-1.72)
    {$d-w_i<w(S_i)\le d-1$};
\end{tikzpicture}
\caption{Feasibility margin in the FPTAS for \GBP.  For the first selected
item $i$ of an optimal cover, $z^*$ is its total weight and $s^*=z^*-w_i$
is its residual weight.  In the high-optimum case, the residual optimum
lies more than $\varepsilon d$ above the lower threshold, while the FPTAS for \SubsetSum loses less than $\varepsilon d$.  Positions are schematic.}
\label{fig:gbp-margin}
\end{figure}

Using the \SubsetSum FPTAS of \citet{KMPS2003}, the overall running time is
polynomial in the input size and $1/\varepsilon$.  Theorem~\ref{thm:gbpfptas} settles the
approximability question raised by \citet{GMP2013}: at the proportional
endpoint both the packing and the cover problem admit an FPTAS.

\subsection{Bounded value-to-weight spread}
\label{sec:spread}

Having approximation schemes at the proportional endpoint $\kappa=1$, we now
ask how much of their guarantee survives when values are allowed to deviate
from the weights by a bounded factor.  Feasibility depends on the weights only, so an
algorithm for the proportional problem can be run on the weights and its
output re-evaluated with the true values.

\begin{theorem}
\label{thm:spread}
Let $\kappa$ be the spread defined in \eqref{eq:kappa}.
For every $\varepsilon>0$:
\begin{enumerate}[label=(\alph*),leftmargin=2em]
  \item \MCMKP admits a $(1+\varepsilon)\kappa$-approximation;
  \item \MPMKC admits a $(1+\varepsilon)\kappa$-approximation;
\end{enumerate}
in both cases with running time polynomial in the input size and $1/\varepsilon$.
\end{theorem}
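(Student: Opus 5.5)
The plan is to run the proportional-case scheme on the weights alone and re-evaluate its output with the true values, exactly as suggested before the statement.  Feasibility (maximality of a packing, minimality of a cover) depends only on the weights and on $c$ or $d$.  So the instance with values replaced by $w_i$ has the same family of feasible sets: for \MCMKP this is an instance of the \LBP, and for \MPMKC an instance of the \GBP.  By Theorem~\ref{thm:gmp-lbp} and Theorem~\ref{thm:gbpfptas} these admit FPTASs.  We call the FPTAS with accuracy $\varepsilon$ on the proportional instance and return its output unchanged.  The running time is then polynomial in the input size and $1/\varepsilon$, since we only relabel values.

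For (a), let $P$ be the returned maximal packing and $P^*$ an optimal one for the true costs.  We use the sandwich $\lambda_-w(S)\le v(S)\le\lambda_+w(S)$, valid for every $S\subseteq I$ by \eqref{eq:kappa}, and the FPTAS guarantee $w(P)\le(1+\varepsilon)\min_{P'}w(P')\le(1+\varepsilon)w(P^*)$.  Chaining these gives
\[
v(P)\le\lambda_+w(P)\le(1+\varepsilon)\lambda_+w(P^*)\le(1+\varepsilon)\kappa\,v(P^*)=(1+\varepsilon)\kappa\,\optP .
\]
For (b), let $C$ be the returned cover, so $w(C)\ge w(C^*)/(1+\varepsilon)$ for any minimal cover $C^*$; take $C^*$ optimal for the true profits.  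Then
\[
v(C)\ge\lambda_-w(C)\ge\lambda_-w(C^*)/(1+\varepsilon)\ge\optC/((1+\varepsilon)\kappa).
\]

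Degenerate cases need a little care.  The spread is defined only for positive values, so $\lambda_->0$ and no division by zero arises.  Nontrivial instances are assumed, so feasible sets exist; trivial ones are solved in linear time as stated in Section~\ref{sec:def}.  We must also check that the FPTAS of Theorem~\ref{thm:gmp-lbp} is phrased for the same feasibility notion (common arrival and deadline, equal to maximal packings), which the introduction asserts.  The step needing most attention is that the cited FPTAS guarantees hold on exactly this feasible family; given that, the argument is a two-line ratio computation on each side.  Note that no assumption on $\kappa$ being fixed is needed: the factor $\kappa$ simply appears in the guarantee.
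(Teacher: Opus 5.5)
Your proposal is correct and matches the paper's proof. Both run the FPTAS of Theorem~\ref{thm:gmp-lbp}, resp.\ Theorem~\ref{thm:gbpfptas}, on the objective $w$. Both then chain $v\le\lambda_+w$ (resp.\ $v\ge\lambda_-w$) with the FPTAS guarantee and with the comparison of the weight-optimal solution against the value-optimal one.

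The only detail the paper adds is that Theorem~\ref{thm:gbpfptas} is stated for $\varepsilon\in(0,1]$. For $\varepsilon>1$ one simply runs it with accuracy $1$, which gives a stronger guarantee.
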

\begin{proof}
It suffices to consider $0<\varepsilon\le1$: for $\varepsilon>1$, running
the algorithms with accuracy $1$ gives a stronger guarantee.
(a) Let $P_\varepsilon$ be the maximal packing returned by the FPTAS of
Theorem~\ref{thm:gmp-lbp} for the objective $w$, and let $P_w^*$ and $P_v^*$
be maximal packings of minimum weight and of minimum cost.  Then
\[
 v(P_\varepsilon)\le\lambda_+w(P_\varepsilon)
 \le(1+\varepsilon)\lambda_+w(P_w^*)
 \le(1+\varepsilon)\lambda_+w(P_v^*)
 \le(1+\varepsilon)\kappa\,v(P_v^*).
\]
(b) Symmetrically, let $C_\varepsilon$ be the minimal cover returned by
the FPTAS of Theorem~\ref{thm:gbpfptas} for the objective $w$, and let $C_w^*$
and $C_v^*$ be minimal covers of maximum weight and of maximum profit.  Then
\[
 v(C_\varepsilon)\ge\lambda_-w(C_\varepsilon)
 \ge\frac{\lambda_-}{1+\varepsilon}w(C_w^*)
 \ge\frac{\lambda_-}{1+\varepsilon}w(C_v^*)
 \ge\frac{v(C_v^*)}{(1+\varepsilon)\kappa}.
\]
\qed
\end{proof}

\begin{remark}
\label{rem:linearspread}
Any approximation for the proportional problem can be used instead, at the
price of its factor: Max Greedy gives a linear-time $2\kappa$-approximation
on the cover side and the $4/3$-approximation of Theorem~\ref{thm:gmp-lbp}
gives $\frac43\kappa$ on the packing side.
\end{remark}

Both endpoints of the spread parameter are covered: $\kappa$ unbounded is
Theorem~\ref{thm:noconst}(b), whose instances have $\kappa$ exponential
in $n$, and $\kappa=1$, where the packing problem has an
FPTAS and so has the cover problem (Theorem~\ref{thm:gbpfptas}), is the
proportional endpoint.  The algorithms of Theorem~\ref{thm:spread} are not
FPTASs for $\kappa>1$, since their ratio tends to $\kappa$ and not to $1$, as $\varepsilon$ goes to $0$.

% ===========================================================================
\section{Conclusions and open problems}
\label{sec:concl}
% ===========================================================================

Maximal knapsack packings and minimal knapsack covers are weakly
$\NP$-hard and solved by weight-indexed pseudo-polynomial dynamic programs,
exactly as the knapsack problem, but they are much harder to approximate:
\KP admits an FPTAS, whereas \vseven{\MCMKP and \MPMKC admit no
$2^{p(n)}$-approximation with positive values for any polynomial $p$,
and no multiplicative factor if zero values are allowed.}
\vseven{They admit no $n^{1-\varepsilon}$-approximation even with unit}
costs or profits (Table~\ref{tab:comparison}).  Approximability can be
recovered by restricting the input data (Table~\ref{tab:landscape}): on the
weights, by excluding small items; on the values, by bounding the
value-to-weight spread; at the proportional endpoint both problems admit
FPTASs.  Thus inclusion-wise extremality does not by itself
destroy approximability; our lower-bound constructions exploit tiny items
together with objectives that are not proportional to the weights.

\begin{table}[ht]
\caption{Polynomial-time approximability of the two problems, assuming
$\PP\neq\NP$.}
\label{tab:landscape}
\centering\small
\begin{tabular}{@{}L{0.24\linewidth}L{0.32\linewidth}L{0.36\linewidth}@{}}
\toprule
Regime & \MCMKP & \MPMKC \\
\midrule
Proportional ($\kappa=1$) & FPTAS \citep{GMP2013} & FPTAS (Thm.~\ref{thm:gbpfptas}) \\
Bounded spread $\kappa$ & $(1+\varepsilon)\kappa$ (Thm.~\ref{thm:spread}) & $(1+\varepsilon)\kappa$ (Thm.~\ref{thm:spread}) \\
Non-small items ($w_i\ge\delta c$, resp.\ $w_i\ge\delta d$), $\delta$ fixed & exact, $n^{O(1/\delta)}$ (Thm.~\ref{thm:nonsmall}) & exact, $n^{O(1/\delta)}$ (Thm.~\ref{thm:nonsmall}) \\
$w_i-1\ge\delta(d-1)$ & --- & scheme polynomial in $1/\delta$; FPTAS for fixed $\delta$ (Prop.~\ref{prop:fptas-delta}, Cor.~\ref{cor:fptas-fixed-delta}) \\
Unit costs/profits & $n$ yes, $n^{1-\varepsilon}$ no (Thm.~\ref{thm:unitgap}, Prop.~\ref{prop:trivial}) & $n$ yes, $n^{1-\varepsilon}$ no (Thm.~\ref{thm:unitgap}, Prop.~\ref{prop:trivial}) \\
Arbitrary objective & \vseven{no $2^{p(n)}$ with positive costs; no factor if zero is allowed} (Thm.~\ref{thm:noconst}) & \vseven{no $2^{p(n)}$ with positive profits; no factor if zero is allowed} (Thm.~\ref{thm:noconst}) \\
\bottomrule
\end{tabular}
\end{table}

\paragraph{Open problems.}
The tractable cases above are sufficient conditions, not a complete
classification.
The remaining questions concern the boundary between hard and tractable instances.
\begin{enumerate}[leftmargin=2em,itemsep=2pt]
  \item \emph{Tight approximation with unit costs or profits.}  The bounds of
        Section~\ref{sec:hard} determine the polynomial exponent but leave a
        subpolynomial gap.  Does either problem admit an
        $o(n)$-approximation, or can the lower bound be strengthened to
        $n^{1-o(1)}$?
  \item \emph{Fixed spread.}  At $\kappa=1$ both problems admit FPTASs,
        whereas for unbounded $\kappa$ no factor $2^{p(n)}$ is achievable
        (Theorem~\ref{thm:noconst}(b)).
        For a fixed $\kappa>1$, is there a PTAS, or are the problems
        $\APX$-hard?  The neighbouring model $v_i=w_i+\text{const}$ is the
        strongly correlated class of \citet{KPP2004}.
  \item \emph{Two-sided minimisation with relative slack.}  For minimising
        $v(S)$ subject to $\ell\le w(S)\le u$, even deciding feasibility is
        $\NP$-hard in general, since $\ell=u$ encodes \SubsetSum.  Under the
        promise $u-\ell\ge\delta u$, is there an approximation scheme
        polynomial in $1/\delta$, analogous to the scheme of \citet{Xu2013}
        for the maximisation version?  It would give the packing-side
        analogue of Proposition~\ref{prop:fptas-delta}.
  \item \emph{Parameterised complexity of the no-small-items regime.}  The
        exact algorithm of Theorem~\ref{thm:nonsmall} runs in
        $n^{O(1/\delta)}$ time, hence is in $\XP$ for the parameter
        $1/\delta$.  Is there an $\FPT$ algorithm, or are the problems
        $\Wone$-hard in this parameter?
\end{enumerate}

\section*{Use of AI tools}
The authors used a large language model (Claude, Anthropic) as an assistant
in preparing this manuscript: to restructure and edit the text, to rewrite
proofs from the authors' earlier technical notes in a more concise form, and
to assist in checking the literature and the proofs of results used from
other papers.  All results, proofs and references were verified by the
authors, who take full responsibility for the content of the paper.

\bibliographystyle{spbasic}
\bibliography{biblio}

\section*{Statements and Declarations}

\paragraph{Funding.}
The authors did not receive support from any organization for the
submitted work.

\paragraph{Competing interests.}
The authors have no competing interests to declare that are relevant to
the content of this article.

\paragraph{Data availability.}
Data sharing is not applicable to this article, as no datasets were
generated or analysed during the current study.

\end{document}